\newtheorem{theorem}{Theorem}
\DeclareMathOperator{\Tr}{tr} %
\newcommand{\dd}{{\rm d}}
\newcommand{\A}{\mathbf{D}}
\newcommand{\B}{\mathbf{B}}
\newcommand{\R}{\mathbf{R}}
\newcommand{\I}{\mathbf{I}}
\newcommand{\mT}{\intercal} 
\renewcommand{\leq}{\leqslant}
\renewcommand{\geq}{\geqslant}
\renewcommand{\le}{\leqslant}
\renewcommand{\ge}{\geqslant}
\newcommand{\EOO}[1]{\textcolor{black} {#1}}
\newcommand{\CYY}[1]{\textcolor{black} {#1}}
\title{Orientationally-averaged diffusion-attenuated magnetic resonance signal for locally-anisotropic diffusion }
\author[1,*]{Magnus Herberthson}
\author[2]{Cem Yolcu}
\author[2]{Hans Knutsson}
\author[2,4]{Carl-Fredrik Westin}
\author[2,3]{Evren \"Ozarslan}
\affil[1]{Dept.\ of Mathematics, Link\"oping University, Link\"oping, Sweden}
\affil[2]{Dept. of Biomedical Engineering, Link\"oping University, Link\"oping, Sweden}
\affil[3]{Center for Medical Image Science and Visualization, Link\"oping University, Link\"oping, Sweden}
\affil[4]{Laboratory for Mathematics in Imaging, Dept.\ of Radiology, Brigham and Women's Hospital, Harvard Medical School, Boston, MA, USA}
\affil[*]{magnus.herberthson@liu.se}
\begin{abstract}
Diffusion-attenuated MR signal for heterogeneous media has been represented as a sum of signals from anisotropic Gaussian sub-domains. Any effect of macroscopic (global or ensemble) anisotropy in the signal can be removed by averaging the signal values obtained by differently oriented experimental schemes. The resulting average signal is identical to what one would get if the micro-domains are isotropically (e.g., randomly) distributed, which is the case for ``powdered'' specimens. 
We provide  exact expressions for the orientationally-averaged signal obtained via general gradient waveforms when the microdomains are characterized by a general diffusion tensor possibly featuring three distinct eigenvalues. Our results are expected to be useful in not only multidimensional diffusion MR but also solid-state NMR spectroscopy due to the mathematical similarities in the two fields.
\end{abstract}
\begin{document}

\flushbottom
\maketitle

\thispagestyle{empty}

In MR examinations of porous media as well as biological tissues, one is often confronted with a medium comprising an isotropic distribution of individually anisotropic domains. The effect of diffusion within such media on the MR signal has thus been considered since the 70s \cite{Callaghan79,Mitra92,Joabsson97}. The problem we tackle here is for the situation where diffusion within each microdomain can be taken to be free, thus can be characterized by a microscopic diffusion tensor $\A$. This assumption has been widely employed in the recent development of multidimensional diffusion MR (see Ref.\ \citeonline{Topgaard17perspective} for a recent review and the references therein), which employs general gradient waveforms for diffusion sensitization. The level of diffusion-sensitivity is fully captured by a measurement tensor $\B$ \cite{Westin14miccai}, yielding the signal attenuation 
\begin{align}
S = e^{-\Tr(\A\B)} 
\end{align}
for the microdomain. When some residual anisotropy is present upon the inherent signal averaging over the sample (or voxel in image acquisitions), a series of diffusion signals can be acquired with rotated versions of the same gradient waveforms. Upon averaging such signals, one obtains the orientationally-averaged signal, which is devoid of any macroscopic (ensemble) anisotropy, i.e., the anisotropy of the orientation distribution function of the microdomains. 

As demonstrated in Ref.\ \citeonline{Topgaard17perspective}, there is a close resemblance between the mathematics involved in this problem with that in multidimensional solid-state NMR spectroscopy \cite{SchmidtRohrSpiess_book}. In the latter, the local structure is described by the chemical shift tensor, and a measurement tensor can be introduced, which is determined by the orientation of the main magnetic field and manipulations of the sample orientation within it \cite{Andrew59,Szeverenyi85,Bax83,Ziegler88,Frydman92}.

Our interest in this article is the average signal obtained by repeating a given measurement protocol in all orientations, which is relevant for both diffusion and solid-state MR applications. Specifically, we extend the existing literature \cite{Joabsson97,Eriksson15,Bloembergen53} by providing explicit expressions for the orientational averages to accommodate measurement and/or structure tensors that are $not$ axisymmetric.  

With $\R$ denoting an arbitrary rotation matrix ($\R \R^{\mT} = \R^{\mT} \R = \I$), the complete set of such measurements is spanned by the expression $\R \B \R^\mT$, hence yielding the orientationally-averaged signal as

\begin{subequations}\label{eq:S_avg}
\begin{align}
  \bar{S} = \left \langle e^{-\Tr (\A \R \B \R^\mT)} \right \rangle_{\R} \ , \label{eq:RBRA}
\end{align}
where the average is over the three dimensional special orthogonal
group SO(3), i.e., the space of all rotations. Since the matrix trace
operation is invariant under cyclic permutation of the product in its
argument, this is equivalent to the expression
\begin{align}
  \bar{S} = \left \langle e^{-\Tr (\R^\mT \A \R \B)} \right \rangle_{\R} \ , \label{eq:BRAR}
\end{align}
\end{subequations}
which demonstrates the utility of such an averaging over rotated
protocols: The information is the same as that which would be obtained
by a single measurement, had the specimen consisted of microdomains
with the same diffusivity ($\A$) distributed uniformly in
orientation. For solids, such a specimen is obtained by grinding the
material into a powder, eliminating any nonuniformity in the
orientational dispersion (macroscopic, global, or ensemble anisotropy) the bulk specimen
might have had. Hence we use the terms
\emph{powder}-averaged and \emph{orientationally}-averaged
interchangeably, given the latter effectively achieves the same
result.

An alternative interpretation of the average in \eqref{eq:BRAR} can be realized when one considers the Laplace transform of a function which takes matrix arguments \cite{Herz55}, in our case a tensor distribution $p(\A)$,
\begin{align} \label{eq:Laplace}
{\cal L}_{p(\A)}(\B)= \int_{\mathrm{Sym}_3} e^{-\Tr (\A \B)} \, p(\A) \, \dd\A \ ,
\end{align}
where the integration is performed over $\mathrm{Sym}_3$, the space of all symmetric $3 \times 3$ matrices, and for those matrices $\B \in \mathrm{Sym}_3$ where the integral converges. With $\overline{\cal P}_3$ denoting the set of all positive semi-definite matrices in $\mathrm{Sym}_3$, the applications of interest in this work will be when all the 
matrices  $\A \in \overline{\cal P}_3$. This means that $p(\A)=0$ if $\A$ is not in $\overline{\cal P}_3$, so that the integration in \eqref{eq:Laplace} can be performed over $\overline{\cal P}_3$ rather than $\mathrm{Sym}_3$. ${\cal L}_{p(\A)}(\B)$ will then exist for all $\B \in \overline{\cal P}_3$.

In \eqref{eq:Laplace}, the integration measure is $\dd\A = \prod\dd D_{ij}$ with $1\le i\le j\le 3$. Interestingly, when $p(\A)$ is taken to be the isotropic distribution of a given $\A$ tensor, 
the above expression is equivalent to \eqref{eq:BRAR}. In other words, the orientationally-averaged signal that we are evaluating in this work is nothing but the Laplace transform of the isotropic distribution of a given tensor. Parallels can be drawn with  previous works \cite{Jian07,Scherrer16,Shakya17Dagstuhl} that have employed parametric Wishart distributions (or its generalizations) for representing the detected MR signal; in these studies, the resulting expression for the Laplace transform is borrowed from the mathematics literature \cite{LetacMassam98}. However, the Laplace transform of the isotropic distribution of a general tensor is not available to our knowledge, which is the focus of our work. \CYY{Note that throughout the article, the phrase ``general tensor'' refers to a tensor whose ellipsoidal representation is not axisymmetric. Thus, a general tensor $\mathbf T$ is still to be understood as being real-valued and having the index symmetry $T_{ij}=T_{ji}$.}

Given that the matrices $\A$ and $\B$ are real and symmetric, and that
the averages \eqref{eq:S_avg} are insensitive to individual rotations
of these matrices, we are free to consider their diagonal
forms, possibly in different bases
\begin{align}
\A = \left(\begin{matrix}
a & 0 & 0 \\
0 & b & 0 \\
0 & 0 & c
\end{matrix} \right) \ , \quad
\B = \left(\begin{matrix}
d & 0 & 0 \\
0 & e & 0 \\
0 & 0 & f
\end{matrix} \right) \ . \label{eq:AB}
\end{align}
It turns out that the average signal \eqref{eq:S_avg} can take a
particularly simple form if repeated eigenvalues occur in at least one
of the matrices $\A$ and $\B$. With an ellipsoidal representation of
these matrices in mind, we refer to these cases as \emph{axisymmetric}
or \emph{isotropic}, depending on whether two or all eigenvalues
coincide, respectively. The rank of the matrices appear to have no
further significance for the simplicity of the calculation, except for
rank-1 being a special case of axisymmetry.

In the following sections, we consider various combinations of
symmetries of $\A$ and $\B$ in evaluating the average
\eqref{eq:S_avg}. Keeping in mind that the roles of $\A$ and $\B$ can
be interchanged without changing the average signal, the relevant
cases are
\begin{enumerate}
\item $\A$ general, $\B$ isotropic,
\item $\A$ and $\B$ axisymmetric, $\B$ rank 1,
\item $\A$ and $\B$ axisymmetric,
\item $\A$ general, $\B$ axisymmetric,
\item $\A$ and $\B$ general.
\end{enumerate}


\section*{Results}\label{sec:averages}



Here, we give the resulting expressions for the average signal
\eqref{eq:S_avg} in the cases listed above, with $\A$ and $\B$ given
as in Eq.\ \eqref{eq:AB}. All derivations are provided in \CYY{Supplementary Section I}, 
where $\bar S$ is first derived for the
case (4) of a general $\A$ and axisymmetric $\B$, from which more
specialized cases (1--3) are deduced, while the most general case (5)
is taken up last.

\subsection*{$\A$ general, $\B$ isotropic }\label{sec:AgenBiso}

This is the case where all eigenvalues ($a$, $b$, $c$) of $\A$ are
possibly different, while $\B$ is proportional to identity with $d=e=f$. One
finds,
\begin{equation}\label{eq:isotropic}
\bar S=e^{-d(a+b+c)}=e^{-d\Tr( \A)} \ .
\end{equation}
Indeed, when $\B=d \I$, $\Tr (\A \R \B \R^{\mT}) = \Tr (d \A)$ and
hence the average in Eq.\ \eqref{eq:S_avg} has no effect.

\subsection*{$\A$ and $\B$ axisymmetric, $\B$ rank 1}\label{sec:AaxBaxRank1}

Here, two eigenvalues of $\A$ coincide, which we choose as $b=c$,
while $\B$ is rank-1 with $e=f=0$. This is a widely-utilized case in diffusion MR \cite{Callaghan79,Joabsson97,Yablonskiy02,Kroenke04,Anderson05,Kaden16,Ozarslan18FiP}.
The result follows as,
\begin{equation}\label{eq:AaxBaxRank1}
\begin{split}
\bar S &=\frac{\sqrt{\pi}e^{-c d} }{2} 
\frac{\text{erf} \left(\sqrt{d(a-c)}\right)}{\sqrt{d(a-c)}}
=\frac{\sqrt{\pi}e^{-c d} }{2} 
\frac{\text{erfi}\left(\sqrt{d(c-a)}\right)}{\sqrt{d(c-a)}} \ ,
\end{split}
\end{equation}
where $\text{erf}(\cdot)$ is the error function and
$\text{erfi}(\cdot)$ is the imaginary error function;
$\text{erfi}(x)=\text{erf}(i x)/i$. One can choose either of the formulas
in \eqref{eq:AaxBaxRank1}, but depending on the sign of $d(a-c)$, one
expression will have real arguments, and the other imaginary. For
instance, with $d>0$, $d(a-c)>0$ if $\A$ is prolate and $d(a-c)<0$ if
$\A$ is oblate. It is assumed that $d\neq 0$ and $a\neq c$, as this
implies that one of the matrices is isotropic, but this case can of
course also be included here by continuity arguments.

\subsection*{$\A$ and $\B$ axisymmetric}\label{sec:AaxiBaxi}
When both matrices have axisymmetry, with $b=c$ and $e=f$, we have \cite{Eriksson15}
\begin{equation}\label{eq:AaxBax}
\begin{split}
\bar S=\frac{\sqrt{\pi}e^{-c d-f(a+c)} }{2} 
\frac{\text{erf}\left(\sqrt{(a-c)(d-f)}\right)}{\sqrt{(a-c)(d-f)}}
=\frac{\sqrt{\pi}e^{-c d-f(a+c)} }{2} 
\frac{\text{erfi}\left(\sqrt{(c-a)(d-f)}\right)}{\sqrt{(c-a)(d-f)}}.
\end{split}
\end{equation}
With similar remarks as in \CYY{the previous case}, any of the
two forms in \eqref{eq:AaxBax} can be chosen, and if moreover $a=c$ or $d=f$,
one gets the \CYY{first (isotropic) case above}.  The first
form has real arguments when both $\A$ and $\B$ are either prolate or
oblate, while the second form may be preferable when one matrix is
prolate and the other oblate.

\subsection*{$\A$ general, $\B$ axisymmetric} \label{sec:agenbaxi}
For the almost-general case where the only condition is axisymmetry of
$\B$ with $e=f$, the result is given in four alternative forms by
\begin{subequations} \label{eq:agenbaxi}
\begin{align} 
  \bar S & = 
  \begin{cases}
    e^{-f \Tr(\A)} \frac {\sqrt{\pi} e^{(f-d) c}} {2} \sum\limits_{n=0}^\infty
    \frac {[(c-a)(d-f)]^n } {\left(n+1/2\right)!} \,
    {}_2F_1 \!\left( \tfrac{1}{2}, -n; 1; \tfrac{a-b}{a-c} \right),
    & a\neq c\quad  \\
    \frac {\sqrt{\pi} e^{-ad -(a+b)f}} {2} 
    \frac {\text{erf} \left(\sqrt{(b-a)(d-f)} \right)} {\sqrt{(b-a)(d-f)}}
    =\frac {\sqrt{\pi} e^{-ad -(a+b)f}} {2}
    \frac {\text{erfi} \left(\sqrt{(a-b)(d-f)}\right)} {\sqrt{(a-b)(d-f)}},
    & a=c \mbox{ \CYY{(case 3)} }
  \end{cases}\\
  &= e^{-f \Tr(\A)} e^{(f-d)c}\sum \limits_{n=0}^\infty 
  \frac {\left[(a-b)(d-f)\right]^n} {(2n+1)n!}  \,
  {}_1F_1 \!\left(n+1;n+\tfrac{3}{2};(c-a)(d-f) \right) \\
  &= e^{-f \Tr(\A)} \frac {\sqrt{\pi} e^{(f-d)c}} {2} \sum_{n=0}^\infty
  \frac {[(c-a)(d-f)]^n} {(n+1/2)!} \,
  {}_2F_2 \!\left(\tfrac{1}{2},n+1;1,n+\tfrac{3}{2};(a-b)(d-f) \right) \\
  &= e^{-f \Tr(\A)} \frac {\sqrt{\pi} e^{(f-d)c}} {2} \sum _{n=0}^\infty
  \frac {[(a-b)(d-f)]^{2n}} {16^{n} \left(2n+\frac{1}{2}\right)!}  \binom{2 n}{n} \,
  {}_1F_1 \!\left(2n+1;2n+\tfrac{3}{2};\tfrac{1}{2} (f-d) (a+b-2 c)\right) .
\end{align}
\end{subequations}
Here, the symbol ${}_iF_j(\cdot)$ stands for a hypergeometric function, in particular with ${}_1F_1 (\cdot)$ being
the confluent hypergeometric function. Eqs.\ (\ref{eq:agenbaxi}a--c) were derived by a rather direct approach; see \CYY{Supplementary Section I.A.} 
The fourth expression (\ref{eq:agenbaxi}d), which is perhaps the most interesting  for $\bar S$ when one matrix is axisymmetric, is derived from the general expression, i.e, the case accounted for \CYY{next}. As discussed \CYY{below}, it can lead to very efficient numerical evaluation.  

The expressions above feature hypergeometric functions. However, it is possible to express them using more familiar functions. For example, by using a scheme \cite{Ozarslan_NI06} that involves term by term comparison with the confluent hypergeometric function's asymptotic behaviour \cite{Abramowitzbook}, we were able to write Eq.\ \eqref{eq:agenbaxi}b in the following form:
\begin{align}
\bar S & = e^{-f \Tr(\A)} \sum_{k=0}^\infty \tfrac{(2k)!}{(k!)^2 4^k} \left[ {(a-b)(d-f)} \right]^k  
\left\{ e^{(f-d)c} \sum_{n=1}^k \tfrac 1 {[2(a-c)(d-f)]^{n}} \sum_{j=0}^{k-n}
\tfrac {(-1)^{j+1} (2n+2j-1)!!} {(k-n-j)! (n+j)! (2j+1)!!} \right. \nonumber \\
& \hspace{16em} \left. + e^{(f-d)a}
\tfrac {\sqrt{\pi} \, \mathrm{erfi} \left( \sqrt{(a-c)(d-f)} \right)}
{2 \sqrt{(a-c)(d-f)}} \sum_{n=0}^k \tfrac{(2n-1)!!}{n! (k-n)!}
\tfrac 1 {[ 2(a-c)(d-f)]^{n}} \right\} \ ,
\end{align}
where $(2n+1)!!$ denotes the product of all odd
numbers from 1 up to $(2n+1)$. Similarly, Eq.\ (\ref{eq:agenbaxi}a)
can be expressed via more familiar functions, using
Eq.\ \eqref{eq:legendreform}.

\CYY{We also note that in the literature the signal for this case is reduced to a one-dimensional integral wherein the integrand is given by a complete elliptic integral \cite{Bloembergen53,SchmidtRohrSpiess_book}.}

\subsection*{$\A$ and $\B$ general}\label{sec:agenbgen}
In the general case where neither matrix (necessarily) has any degenerate eigenvalue, the result can be expressed in one of the three alternative forms
\begin{subequations} \label{eq:agenbgen}
\begin{align} 
&\bar S=e^{-\frac{1}{2}(a+b)(e+f)-cd} \hspace*{-4mm}\sum_{0\leq m \leq k < \infty}\hspace*{-4mm}q_{mk}Y_{mk} \ ,
\end{align} 
where 
\begin{align} 
\begin{split}
&q_{mk}=
\begin{cases}
\begin{cases}
\frac{((c-a)(e-f))^k}{4^k ((k/2)!)^2}, &m=0 \mbox{ and $k$ even} \\
0, & \mbox{ otherwise}
\end{cases},& a=b\\
\frac{1}{2^k}(c-a)^{k-m}\sum\limits_{j=\left \lfloor{\frac{m+1}{2}}\right \rfloor}^{k/2}
I_{k\! - \! j}\left(\! \frac{(e-f)(a-b)}{2}\right)(a-b)^{m-j}(e-f)^{k-j}\frac{1}{j!(k-2j)!}\binom{2j}{m}
,& a+b-2c=0\\
\frac{\left(c-a\right)^{k-m}}{2^k (k-m)!}
\sum\limits_{j=0}^{k/2}I_{k\! - \! j}\left(\! \frac{(e-f)(a-b)}{2}\right)\frac{(a\! +\! b\! - \! 2 c)^{m-2j}
(a-b)^j(e-f)^{k-j}}{j!}
\! _2\widetilde F_1(m\! - \!k,-2j;1\!+\! m-2j,\frac{a\! +\! b\! -\! 2 c}{a-b}), &\mbox{otherwise}
\end{cases}
\end{split}
\end{align}
and there are three alternatives for $Y_{mk}$ as follows:
\begin{align} 
&Y_{mk}^{(1)}=
\begin{cases}
\frac{\sqrt{\pi}4^{-m}}{2}\binom{2 m}{m}\sum\limits_{n=0}^\infty\frac{(\frac{1}{2}(c-a)(2d-e-f))^n  (k+n)! \,
   _2F_1\left(m+\frac{1}{2},-n;m+1;\frac{a-b}{a-c}\right)}{n! (k+n+1/2)!}, & a\neq c\\
  \frac{\sqrt{\pi}4^{-m}}{2}\binom{2 m}{m}\ \frac{k! }{\left(
   k+1/2\right)!}\, _3F_2\left(1,k+1,m+\frac{1}{2};k+\frac{3}{2},m+1; \frac{1}{2}(a-b)(2d-e-f)\right), & a=c
\end{cases} \\
&Y_{mk}^{(2)}=\frac{\sqrt{\pi}4^{-m}}{2}\sum_{n=0}^\infty\tfrac{(\frac{1}{2}(a-b)(2d-e-f))^n (k+n)! \binom{2 (n+m)}{n+m} \,
   _1F_1\left(k+n+1;k+n+\frac{3}{2};\frac{1}{2}(c-a)(2d-e-f)\right)}{4^n n! \left(k+ n+1/2\right)!},\\
&Y_{mk}^{(3)}=\frac{\sqrt{\pi}4^{-m}}{2}\binom{2 m}{m}
\sum_{n=0}^\infty\tfrac{(\frac{1}{2}(c-a)(2d-e-f))^{n } (k+n )! \,
   _2F_2\left(m+\frac{1}{2},k+n +1;m+1,k+n +\frac{3}{2};\frac{1}{2}(a-b)(2d-e-f)\right)}{n !
   \left(k+n +1/2\right)!} \ .
\end{align}
\end{subequations}
The coefficients $q_{mk}$ contain $I_m (\cdot)$, i.e., the modified
Bessel function of the first kind for various orders $m$, and the
regularized hypergeometric function ${}_2\widetilde F_1 (\cdot)$. Also
note that the floor function $\lfloor \cdot \rfloor$ indicates the
largest integer smaller than or equal to its argument.

It is straightforward to check that $\bar S$ as given by
Eq.\ \eqref{eq:agenbgen} is invariant under the rescaling $\A \to
\lambda \A, \B \to \tfrac{1}{\lambda}\B$ for nonzero $\lambda$,
verifying an obvious invariance it should have due to its definition
\eqref{eq:S_avg}. Implied by that definition, $\bar S$ must also be
unaffected by permutations of $\{a,b,c\}$ and of $\{d,e,f\}$, as well
as swapping $\{a,b,c\} \leftrightarrow \{d,e,f\}$. This is not at all
evident in the expansions above; in fact the ordering of eigenvalues
can have drastic effects on the numerical behaviour of the series,
even though the eventual sum does not change. Given $\A$ and $\B$, and
their eigenvalues, the remaining task is thus to assign them (in some order) to $a,b,
c$ and $d,e,f$ in such a way that the series can be approximated well
with only a few terms; see the next section.


\section*{Numerical Behaviour}\label{sec:implementation}

In this section we give some comments and guidelines on how to
evaluate $\bar S$ via the series in Eqs.\ \eqref{eq:agenbaxi} and
\eqref{eq:agenbgen}. The numerical behaviour of the series given in
Eqs.\ \eqref{eq:agenbaxi} and \eqref{eq:agenbgen} varies drastically
depending on how the eigenvalues of $\A$ and $\B$ are ordered (i.e.,
which eigenvalue is named $a$, $b$, and so on). In the formulas in
Eqs.\ \eqref{eq:agenbgen}, there is also the option of interchanging
$\A$ and $\B$. \CYY{The discussion here revolves around an example
employing Eq.\ \eqref{eq:agenbaxi}, but the guidelines apply to the
general case of Eq.\ \eqref{eq:agenbgen} as well, albeit more
involved.}



When $\B$ is axisymmetric ($e=f$), $f$ and $d$ are fixed in Eqs.\ 
\eqref{eq:agenbaxi}. On the other hand, in \CYY{these formulas}, 
which all give the same result, one is free to permute $\{a,b,c\}$. Although this does not
affect the answer, it affects the number of terms needed to get a good
approximation. Hence, given the three eigenvalues of $\A$, one wants
to assign them to $a,b,c$ in a clever way, as well as choose the most
efficient formula.

Looking at the expansions \eqref{eq:agenbaxi}, while there appears no
obvious ``best'' choice for assigning a given set of eigenvalues, it is
still wise to try and (\emph{i}) keep the series expansion parameter (e.g., $(a-b)(d-f)$ in Eq. \ref{eq:agenbaxi}d) small, 
and (\emph{ii}) avoid alternating signs, since an expansion where all
terms have the same sign cannot suffer from cancellation effects. For
instance, it is worth paying attention to the magnitude and sign of
$(c-a)(d-f)$ and ${}_2 F_1 \!\left( \tfrac 12,-n;1; \tfrac {a-b}{a-c}
\right)$ in Eq.\ (\ref{eq:agenbaxi}a), and so on for the
rest. Therefore, even though a deep analysis of the properties of
hypergeometric functions is beyond the scope of this work, it is
useful to note the following as a guideline for their sign and
magnitude. With $n \in \mathbf{N}$, the functions ${}_2F_1 \!\left(
\frac 12, -n; 1; x \right) > 0$ are positive and decreasing functions
of $x<0$, and happen to be polynomials. On the other hand, $_2F_2
\!\left(\frac{1}{2},n+1;1,n+\frac{3}{2};x\right) >0$ are positive and
increasing functions of $x \geq 0$. Finally, the functions $_1F_1
\!\left(n+1;n+\frac{3}{2};x \right)>0$ are positive and increasing for
all $x \in \mathbf{R}$. (See \CYY{Supplementary Section II} 
for some elaboration on this.)

In Figures \ref{fig:f21}--\ref{fig:S4}, the contributions of
individual terms in the series are displayed, with the index of the
terms running on the horizontal axis. \CYY{The figures correspond, respectively, to the alternative forms (\ref{eq:agenbaxi}a) and (\ref{eq:agenbaxi}d), which were chosen as examples to illustrate clearly} that term by term each series exhibits different
behaviours depending on the allocation of the eigenvalues to the
parameters $a,b,c$. For the particular case presented in the figures,
the matrix $\A$ has eigenvalues $0.1,0.2,3$ and the matrix $\B$ has
eigenvalues $6, 0.5, 0.5$, for which $\bar S \approx .019175$, and all
the terms are normalized by $\bar S$ so that their sum is $1$.

\begin{figure}[htbp]
\begin{center}
\includegraphics[scale=.33]{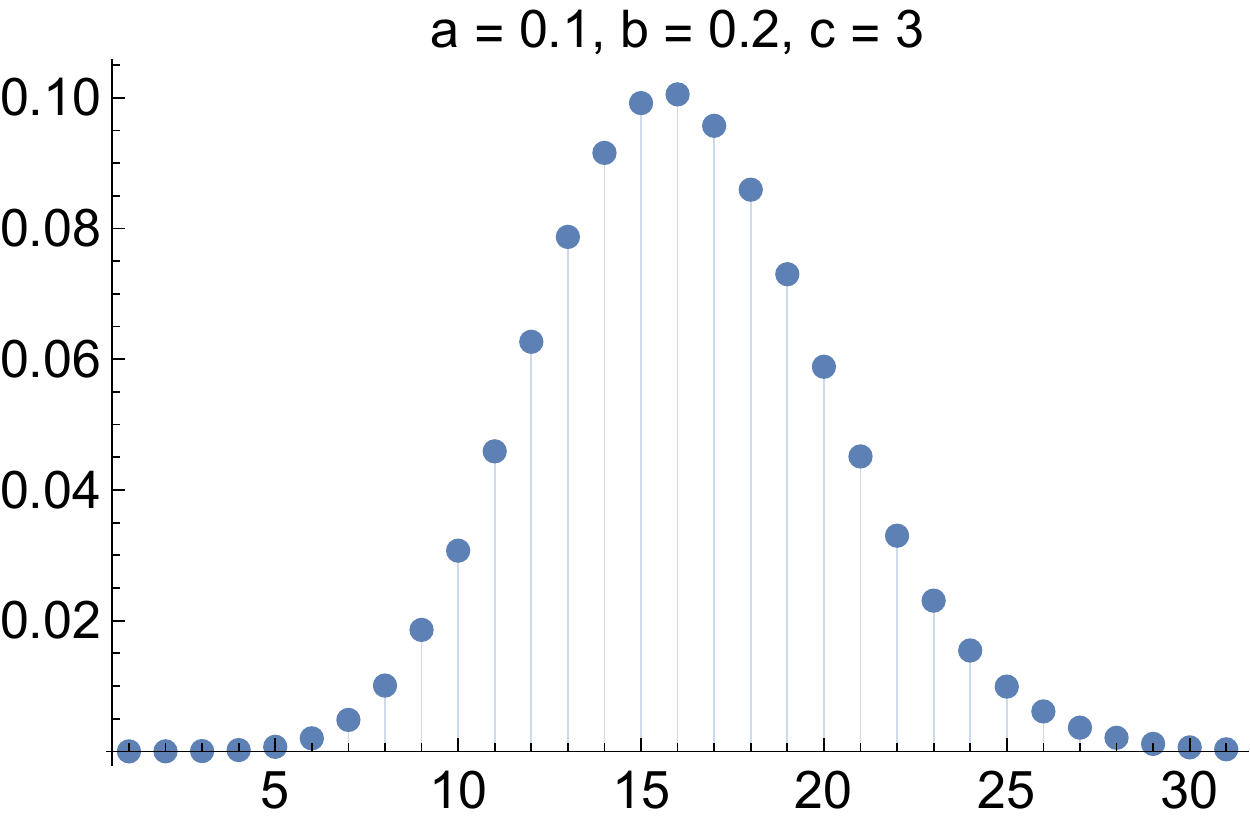}
\includegraphics[scale=.33]{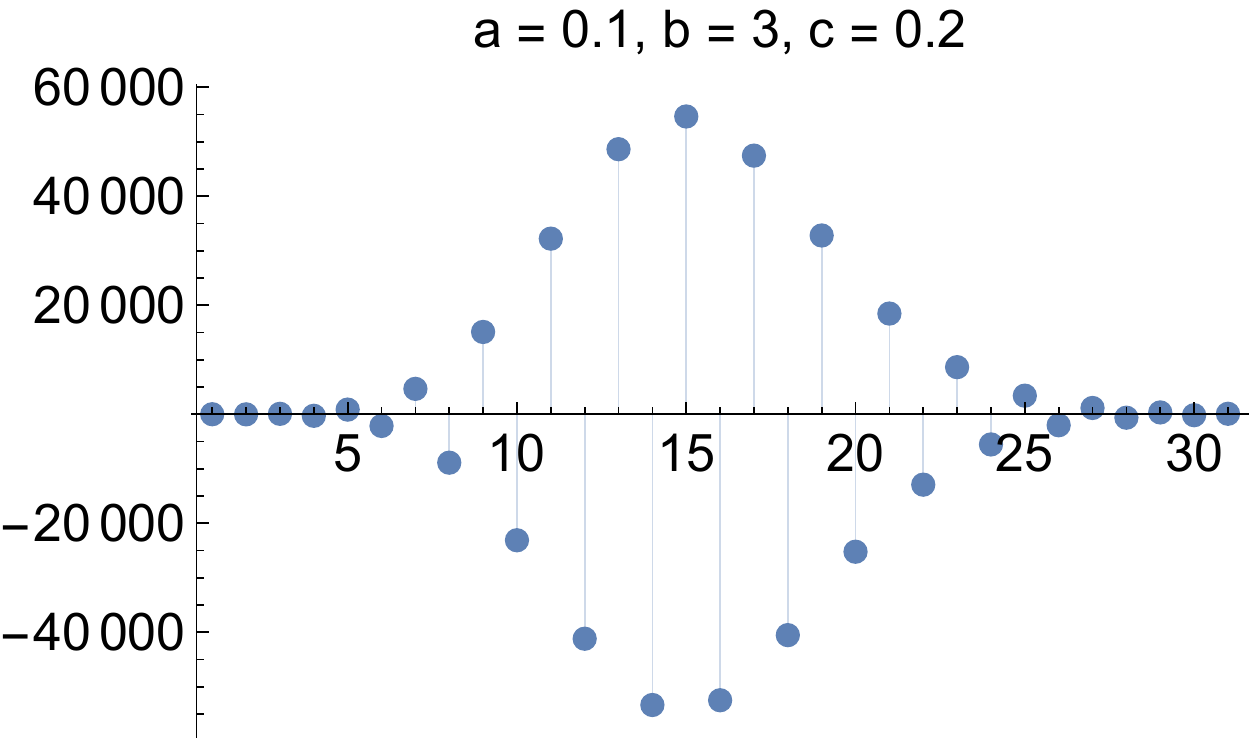}
\includegraphics[scale=.33]{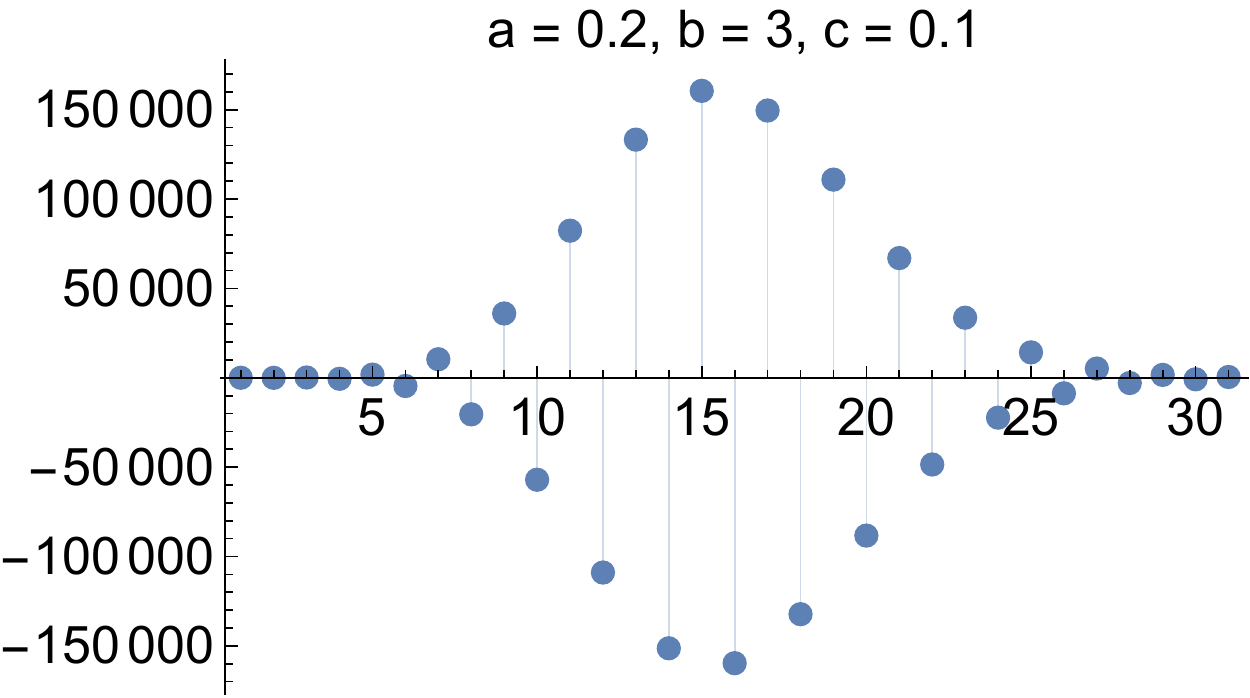}
\caption{The first terms in the series in Eq.\ (\ref{eq:agenbaxi}a), which uses the hypergeometric function ${}_2F_1$. Terms are normalized by $\bar S$ so that they add up to $1$. $\B$ has eigenvalues $d=6$, $e=f=0.5$, while $\A$ has eigenvalues $0.1,0.2,3$. Depending on how the latter are assigned to $a,b$ and $c$, the trend of the series expansion varies. \CYY{Displayed are three (instead of six) distinct choices, because} each term is invariant under the change $a \leftrightarrow b$, which is not obvious from Eq.\ (\ref{eq:agenbaxi}a), but can be shown using Eq.\ \eqref{eq:legendreform}.}
\label{fig:f21}
\end{center}
\end{figure}

The first alternative (\ref{eq:agenbaxi}a) has the property that its
individual terms are invariant under the change $a \leftrightarrow
b$, \CYY{which can be shown to follow from relation
  \eqref{eq:legendreform}.} Therefore, out of the six possible
assignments between the elements of the sets $\{a,b,c\}$ and
$\{0.1,0.2,3\}$ only three are distinct. These are what is depicted in
Figure \ref{fig:f21}. It is seen clearly that for these parameters
this alternative does not afford a series expansion that can be
truncated after the first few terms for a usable result. The terms
making the most significant contribution to the sum are not even at
the beginning, but in this example rather around term number 15. Furthermore, the two
latter choices exhibit terms of alternating sign and magnitudes of
about $10^5$ times the sum itself.

\begin{figure}[htbp]
\begin{center}
\includegraphics[scale=.33]{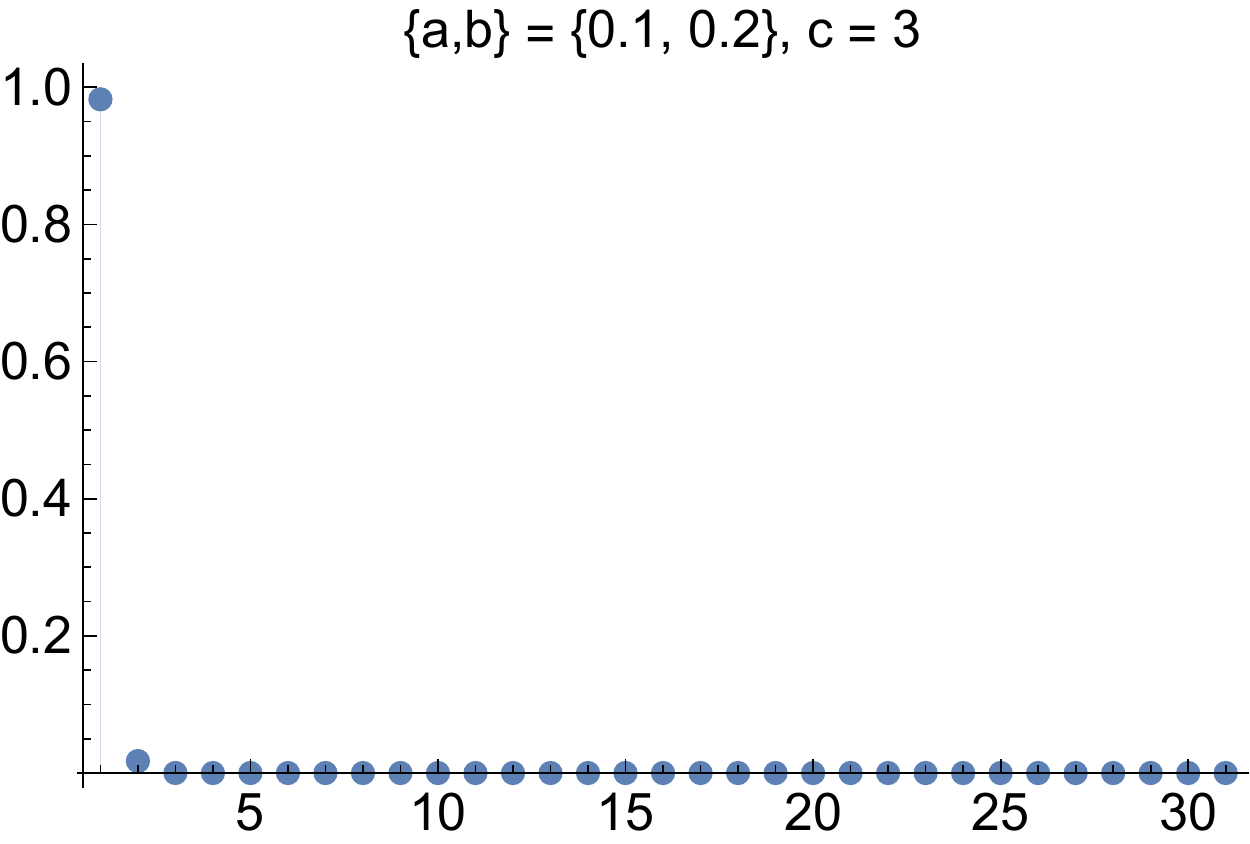}
\includegraphics[scale=.33]{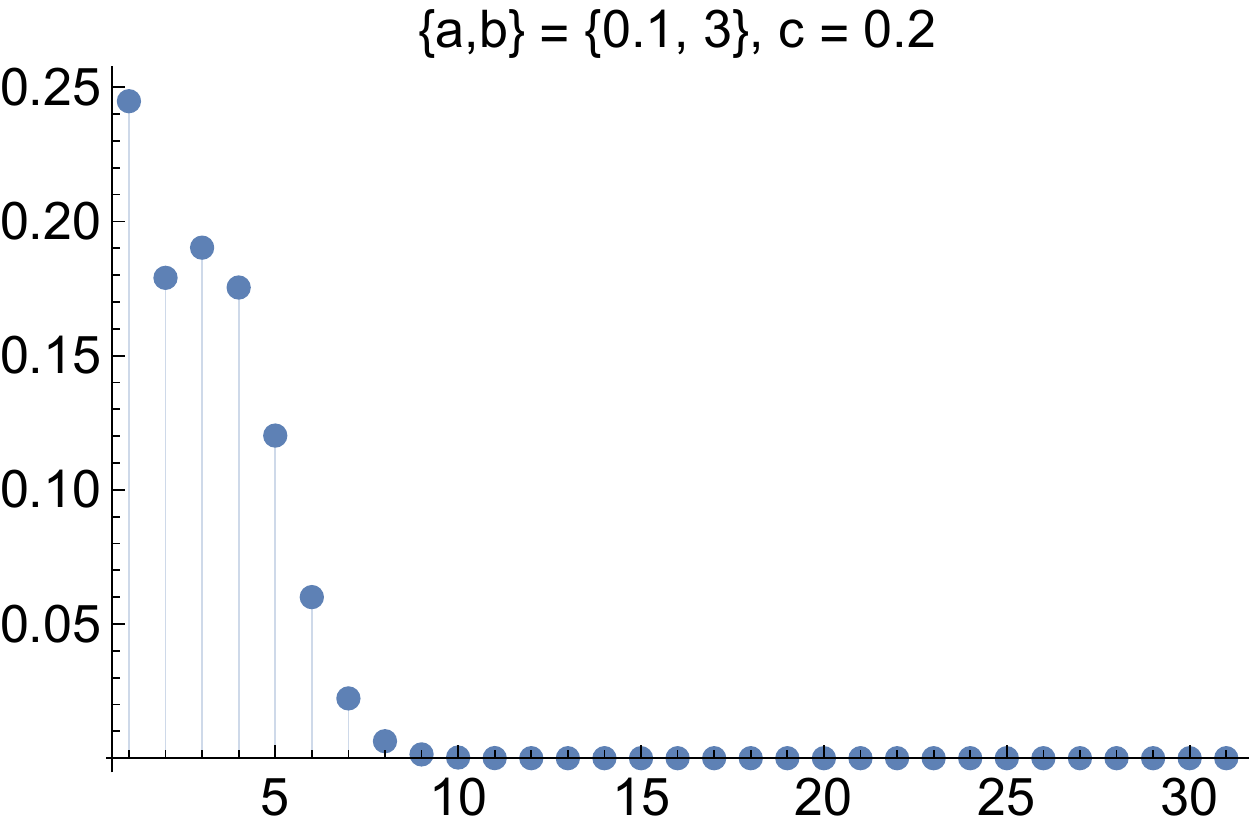}
\includegraphics[scale=.33]{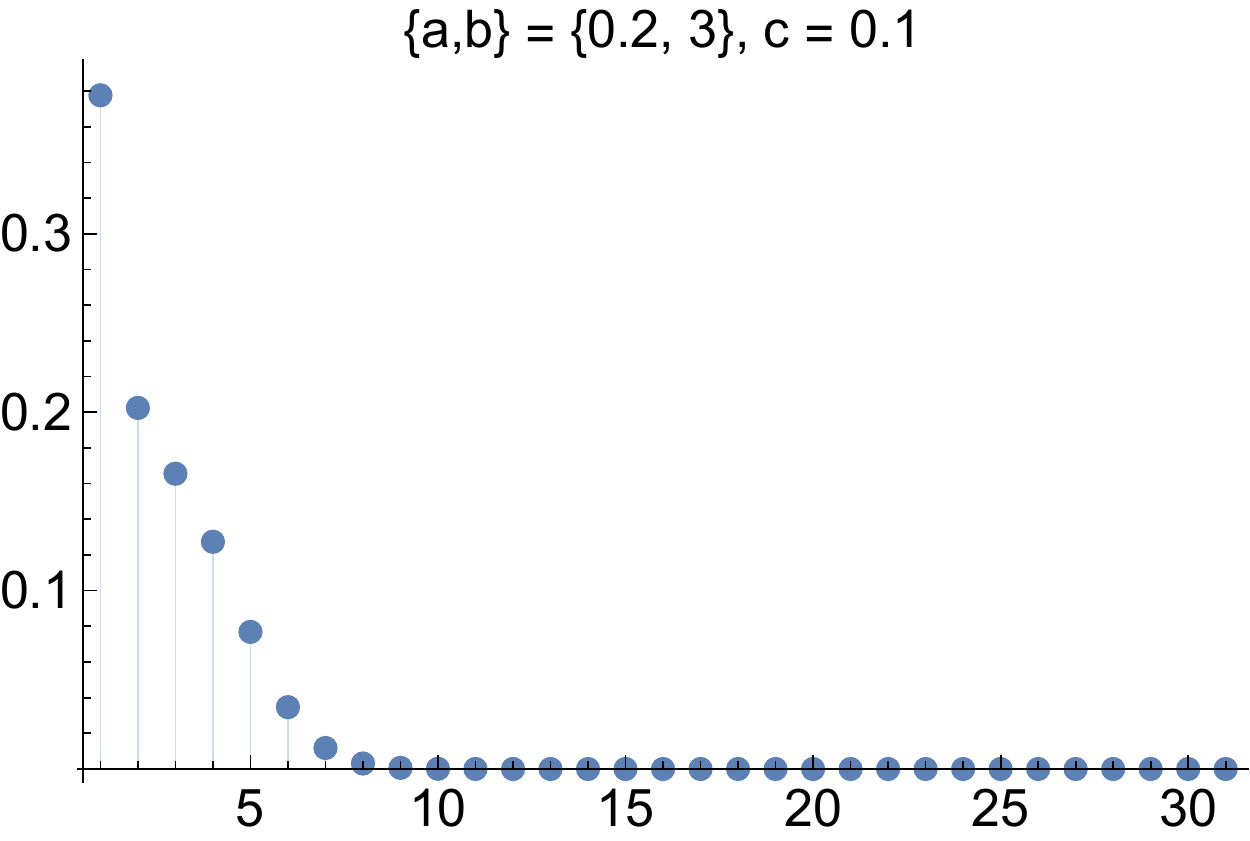}
\caption{The first terms in the series in Eq.\ (\ref{eq:agenbaxi}d),
  which uses the hypergeometric function ${}_1F_1$. Terms are
  normalized by $\bar S$ so that they add up to $1$. $\B$ has
  eigenvalues $d=6$, $e=f=0.5$, while $\A$ has eigenvalues
  $0.1,0.2,3$. Depending on how the latter are assigned to $a,b$ and
  $c$, the trend of the series expansion varies. \CYY{Displayed are
    three (instead of six) distinct choices, since} individual terms
  of Eq.\ (\ref{eq:agenbaxi}d) are manifestly invariant under swapping
  $a \leftrightarrow b$.}
\label{fig:S4}
\end{center}
\end{figure}

\CYY{Figure \ref{fig:S4} depicts all three distinct choices for the
  alternative expression (\ref{eq:agenbaxi}d). We see a very desirable
  feature here. When the largest value is assigned to $c$, the terms
  are seen to converge quickly (note that the expression is invariant
  under the exchange $a \leftrightarrow b$). With this choice (e.g.,
  $a=0.2, b=0.1, c=3$) we have ${}_1
  F_1(2n+1;2n+3/2;(f-d)(a+b-2c)/2)={}_1 F_1(2n+1;2n+3/2;15.675)$ which
  is bounded with respect to $n$. In fact, it follows from termwise
  comparison in the defining series for $_1F_1$ (see Supplementary
  Section II) 
  that $_1F_1 (2n+1, 2n+3/2;
  15.675) \leq {}_1F_1 (2n+3/2;2n+3/2;15.675)=e^{15.675}$. We also see
  that only even powers $(a-b)(d-f)=0.55$ enter, resulting in all
  terms of the series being positive; no cancellation occurs. This
  choice leads to a sum that converges so quickly that $\bar S$ is
  approximated within $2\%$ by only the first term, while the first
  two terms attain an error less than 0.01\%.  In addition, it should
  also be noted that for moderate values of $n$, ${}_1
  F_1(n+1;n+3/2;x)$ are explicitly expressible in elementary functions
  and that for ${}_1F_1$ there are effective recursive relations at
  hand (see Supplementary Section II).} 

\paragraph{Swapping $a$ and $b$}
By construction, all series are unaffected by permutations of $a,b,c$
(although it affects the numerical behaviour) but in the series
(\ref{eq:agenbaxi}a), also the individual terms are unaffected by the
change $a \leftrightarrow b$.  This follows from the relation
\begin{equation}\label{eq:legendreform}
{}_2F_1 \!\left( \tfrac{1}{2}, -n; 1;z\right)=(1-z)^{n/2} P_n\!\left(\frac{2-z}{2 \sqrt{1-z}}\right) \ , 
\end{equation}
where $P_n(.)$ is the $n$th order Legendre polynomial. It is then readily verified that $(a-c)^n\, {}_2F_1\left(
\tfrac{1}{2}, -n; 1; \tfrac{a-b}{a-c}\right)$ and $(b-c)^n\,
      {}_2F_1\left( \tfrac{1}{2}, -n; 1; \tfrac{b-a}{b-c}\right)$ are
      indeed equal.
The terms in the series (\ref{eq:agenbaxi}d) are also unaffected
by the change $a \leftrightarrow b$, but this is immediate from
the expression.

\section*{Applications}

We have presented explicit formulas for the orientationally-averaged signal $\bar S$ in Eq.\  \eqref{eq:S_avg} for general (symmetric) matrices $\A$ and $\B$. These formulas complement the well-known cases 1, 2, and 3, i.e., the formulas in Eqs.\ \eqref{eq:isotropic}--\eqref{eq:AaxBax}, where $\A$ and $\B$ have various symmetries. However, even when one matrix, say $\B$, is rank-1, the signal expression for a general $\A$ (Eq.\ \eqref{eq:agenbaxi}) is believed to be new. 
Eq.\ \eqref{eq:agenbgen} raises the question of the usefulness of case 5, in which both measurement and diffusion tensors are general.
\EOO{We argue that this solution is indeed useful, for example, when a powdered specimen characterized by a general diffusion tensor is imaged. In MRI, the imaging gradients lead to a rank-3 measurement tensor even when a standard Stejskal-Tanner sequence is employed. Moreover, } 
since for a general $\A$, even using a rank-1 measurement tensor $\B=\begin{psmallmatrix} d & 0 & 0 \\0 & 0 & 0 \\ 0 & 0 & 0 \end{psmallmatrix}$, the knowledge of $\bar S(d)$ for sufficiently many $d$ determines $\A$. \CYY{Below,} we also discuss power-laws and asymptotic behaviour of the signal in a more general setting. 

From a measurement side, many ``independent'' measurements are necessary in order to mitigate noise-related effects.
With an axisymmetric measurement tensor $\B$, the space of measurements is two dimensional (two degrees of freedom in $\B$), while in the general case the 
space of measurements is three dimensional. \emph{Loosely speaking}, this means that in the latter case, there are more ``independent'' measurement tensors available
close to the origin of this space (or any other point common to both spaces, for that matter). Since measurements close to the origin (i.e., $\B$ tensors with small eigenvalues) produce higher signal value, this is favorable from a signal-to-noise ratio perspective.
On the theoretical side, however, there are situations where a general $\B$ tensor actually is crucial in determining the 
diffusivity properties of the specimen; see \CYY{subsection below on the estimation of $\A$ from the series expansion of $\bar{S}$.} 

\subsection*{Remarks on the white-matter signal at large diffusion-weighting}\label{sec:signalwhitematter}

Recently, the orientationally-averaged signal in white-matter regions of the brain was observed to follow the power-law $\bar S \propto d^{-1/2}$ at large $d$ when $e=f=0$, e.g., in traditional Stejskal-Tanner measurements \cite{McKinnon17,Novikov18modeling}. Because such decay is predicted for vanishing transverse diffusivity, these results have been interpreted to justify the ``stick'' model of axons \cite{Behrens03,Kroenke04,Zhang12} ($a>b=c=0$) while also suggesting that the signal from the extra-axonal space disappears at large diffusion-weighting. In a recent article \cite{Ozarslan18FiP}, we showed that such a decay is indeed expected for one-dimensional curvilinear diffusion observed via narrow pulses. On the other hand, in acquisitions involving wide pulses, the curvature of the fibers has to be limited in order for $\bar S \propto d^{-1/2}$ behaviour to emerge. We also noted that the $\bar S \propto d^{-1/2}$ dependence is valid for an intermediate range of diffusion weightings as the true asymptotics of the signal decay is governed by a steeper decay \cite{Ozarslan18FiP}. 

Here, based on these findings, we consider a rank-1 diffusion tensor for representing  intra-axonal diffusion, which is capable of reproducing the intermediate $\bar S (d,0,0) \propto d^{-1/2}$ dependence for measurement tensors of the form  $ \B=\begin{psmallmatrix} d & 0 & 0 \\0 & 0 & 0 \\ 0 & 0 & 0 \end{psmallmatrix}$. For rank-2, axisymmetric measurement tensors, $ \B=\begin{psmallmatrix} d & 0 & 0 \\0 & d & 0 \\ 0 & 0 & 0 \end{psmallmatrix}$, the orientationally-averaged decay is characterized by the power-law $\bar S_a(d,d,0) \propto d^{-1}$ with $a$ being the single nonzero eigenvalue of $\A$\CYY{, which can be shown to follow from Eq.\ \eqref{eq:AaxBaxRank1} upon interchanging $\A$ and $\B$}. 
For non-axisymmetric, rank-2 measurement tensors, the signal, $\bar S (d,e,0)$ with $d>e$ obeys the relationship
\begin{align}
\bar{S}_a(d,d,0) \le \bar{S}_a(d,e,0) \le \bar{S}_a(e,e,0) \ .
\end{align}
Now, consider the case in which the measurement is tuned by ``inflating'' the $\B$-matrix while preserving its shape, i.e., varying $d$ while keeping $e/d$ fixed. Then, both the lower and upper bounds indicated in the above expression decay according to $\bar S_a \propto d^{-1}$. To satisfy the inequality, it must thus also be that $\bar{S}_a(d,e,0) \propto d^{-1}$ , which establishes the decay of the orientationally-averaged signal decay obtained via \emph{general} rank-2 ($\B$ matrices). Thus, an alternative validation of the stick model could be performed by acquiring data using planar encoding, in which case the expected decay of the orientationally averaged signal would be characterized by a decay $\propto d^{-1}$ regardless of whether or not the $\B$ matrix is axisymmetric.

\subsection*{Observation of the (non)axisymmetry of local diffusion tensors}\label{sec:nonaxisymmetry}

Here, we focus our attention to rank-1 measurement tensors, $ \B=d\begin{psmallmatrix} 1 &0 & 0 \\0 & 0 & 0 \\ 0 & 0 & 0 \end{psmallmatrix}$, e.g., obtained using the Stejskal-Tanner sequence, and investigate to what extent  $\A$ is determined when $\bar S=\bar S_{\A}(d)$ is known for some values $d$. Specifically, we ask the question:  Is it possible to distinguish an axisymmetric $\A$ from a non-axisymmetric $\A$? 

\begin{figure}[h!]
\begin{center}
\includegraphics[trim={1cm 0 0 0},clip,scale=.5]{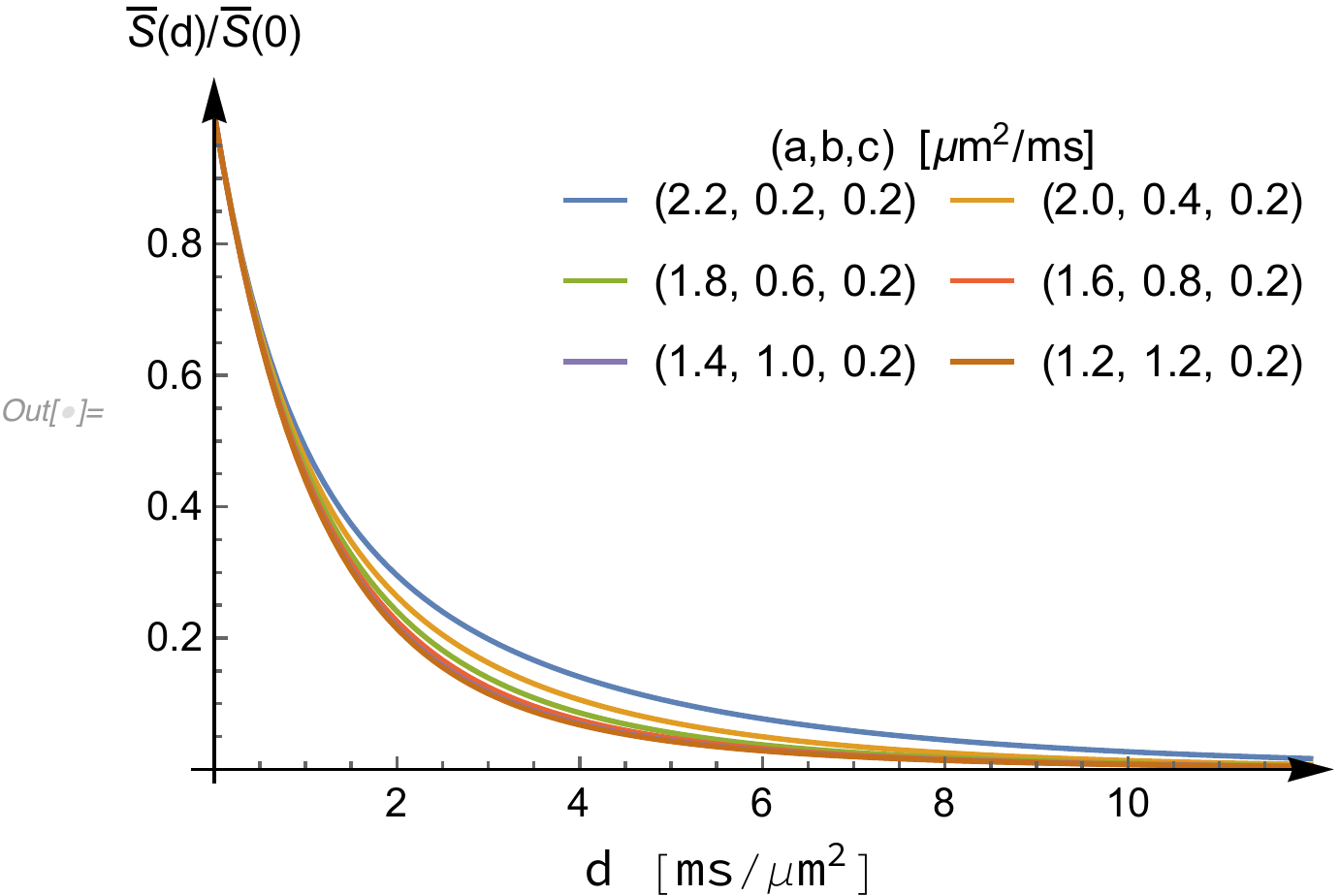}
\caption{Given the rank-1 measurement tensor $\B$, with eigenvalues $(d,0,0)$, the signal $\bar S_{\A_i}(d)$ is plotted as a function of $d$ for six different diffusion tensors $\A_1, \A_2, \ldots \A_6$. These diffusion tensors all have the same trace, and their eigenvalues $(a,b,c)$ are in order
$\A_1:(2.2, 0.2, 0.2)$ $\mu$m${}^2/$ms,
$\A_2:(2.0, 0.4, 0.2)$ $\mu$m${}^2/$ms,
$\A_3:(1.8, 0.6, 0.2)$ $\mu$m${}^2/$ms,
$\A_4:(1.6, 0.8, 0.2)$ $\mu$m${}^2/$ms,
$\A_5:(1.4, 1.0, 0.2)$ $\mu$m${}^2/$ms,
$\A_6:(1.2, 1.2, 0.2)$ $\mu$m${}^2/$ms.
} \label{fig:decay1}
\end{center}
\end{figure}
The answer to this question is yes; in fact, it is not so hard to see that if $\bar S_{\A_1}(d)$ and $\bar S_{\A_2}(d)$ are equal for all $d \geq 0$, then $\A_1$ and $\A_2$ must be equal---see for instance the remark near Eq.\ \eqref{eq:Brank1} in the next section. \CYY{(Also recall that we identify the matrices according to their eigenvalues; two matrices are ``equal'' for the purposes of this article if one can be rotated into the other.)} 
As an illustration, starting with a rank-1 measurement tensor $ \B=d\begin{psmallmatrix} 1 &0 & 0 \\0 & 0 & 0 \\ 0 & 0 & 0 \end{psmallmatrix}$, Figure \ref{fig:decay1} shows the signal $\bar S$ as a function of $d$ for six different diffusion tensors $\A_1, \A_2, \ldots \A_6$. In this example, all tensors $\A_i$, $i=1,2,\ldots 6$ have the same trace yielding the same behaviour near the origin, and their common smallest eigenvalue leads to the same large $b$ behaviour. By varying the remaining two eigenvalues, (whose values are found in Figure \ref{fig:decay1}) while keeping their sum constant, different curves $\bar S_{\A_i}(d)$ are produced. The six curves shown in Figure \ref{fig:decay1} are samples from a family (indexed by the diffusion tensor $\A$) of curves characterized by their initial and asymptotic behaviour. It is interesting to note that no axisymmetric diffusion tensor $\A$ other
than $\A_1$ and $\A_6$ can produce a curve in this family. In this context, the `same asymptotic behaviour' refers to the same exponential decay as $d\to \infty$, and this is given by the smallest eigenvalue of the diffusion tensor.
The initial behaviour, i.e., $\bar S'(0)$ is given by the trace of the diffusivity, and 
it is immediate to see that an axisymmetric tensor where the smallest eigenvalue is $0.2$\nolinebreak\ $\mu$m${}^2/$ms and where the trace is $2.6$ $\mu$m${}^2/$ms
has to have the eigenvalues of either $\A_1$ or $\A_6$.

In the remaining part of this section, we consider the intermediate $d$ regime alluded to in the previous section. For a general rank-3 diffusion tensor $\A=\begin{psmallmatrix} a & 0 & 0 \\ 0 & b & 0 \\ 0 &
0 & c \end{psmallmatrix}$, the falloff is exponential, which can be inferred from the expression (for $a\ge b \ge c>0$)
\begin{align}
\bar S_{a,a,c}(d)
\leq
\bar S_{a,b,c}(d)
\leq 
\bar S_{a,c,c}(d)
\end{align}
since both the left- and the right-hand-sides of the above expression decay exponentially fast; see \eqref{eq:AaxBaxRank1}. Consequently, a reliable inference cannot be made in the large $d$ regime in typical acquisitions due to limited SNR. The only exception is when $c=0$, i.e., when $\A$ is of rank 2. In this case, the arguments from the preceding section can be employed by interchanging the matrices $\A$ and $\B$, i.e., 
\begin{align}
\bar S_{a,a,0}(d)
\leq
\bar S_{a,b,0}(d)
\leq 
\bar S_{b,b,0}(d) \ ,
\end{align}
where, $\bar S_{a,a,0}(d) = (2ad)^{-1}$ for large $d$. 
Hence we expect \CYY{(at least when $a>b$)}
\begin{align}
\frac{1}{2ad}
\leq
\bar S_{a,b,0}(d)
\leq 
\frac{1}{2bd} \ \mbox{for large } d.
\end{align}
This is indeed the case, as the following theorem shows.
\begin{theorem} \label{thm:kappa}
Suppose that the diffusion tensor $\A$ has eigenvalues $a,b,0$ where $a,b>0$, and that the measurement tensor $\B$ has eigenvalues $d,0,0$.
Denoting the corresponding powder average with $ \bar S_{a,b,0}(d)$, it then holds that
$$
\lim_{d \to \infty}  d\bar S_{a,b,0}(d)=\frac{1}{2\sqrt{a b}}.
$$
\end{theorem}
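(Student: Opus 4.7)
The plan is to reduce the powder average to an integral over the unit sphere, carry out the polar-angle integration in closed form, and then extract the large-$d$ asymptotics of the remaining azimuthal integral.

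First, I would write $\B=d\uu\uu^\mT$ with $\uu=(1,0,0)^\mT$. Using cyclicity of the trace, $\Tr(\A\R\B\R^\mT)=d(\R\uu)^\mT\A(\R\uu)$. Since $\R\uu$ is uniformly distributed on $S^2$ as $\R$ ranges over $\SO(3)$ with Haar measure, this gives
\begin{equation*}
\bar S_{a,b,0}(d) = \frac{1}{4\pi}\int_{S^2} e^{-d\,\vv^\mT\A\vv}\,d\Omega(\vv).
\end{equation*}
Parametrising $\vv=(\sin\theta\cos\phi,\sin\theta\sin\phi,\cos\theta)$ and setting $u=\cos\theta$, the quadratic form becomes $\vv^\mT\A\vv=(1-u^2)\alpha(\phi)$, where $\alpha(\phi):=a\cos^2\phi+b\sin^2\phi\geq\min(a,b)>0$. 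Hence
\begin{equation*}
\bar S_{a,b,0}(d) = \frac{1}{4\pi}\int_0^{2\pi}\!\int_{-1}^{1} e^{-d(1-u^2)\alpha(\phi)}\,du\,d\phi.
\end{equation*}

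Next, I would evaluate the inner integral as a standard Gaussian,
\begin{equation*}
\int_{-1}^{1} e^{-d(1-u^2)\alpha(\phi)}\,du = \frac{\sqrt{\pi}\,e^{-d\alpha(\phi)}}{\sqrt{d\alpha(\phi)}}\,\mathrm{erfi}\bigl(\sqrt{d\alpha(\phi)}\bigr),
\end{equation*}
which is equivalent to the case~2 powder signal \eqref{eq:AaxBaxRank1} applied to an axisymmetric diffusion tensor with eigenvalues $(0,\alpha(\phi),\alpha(\phi))$. Using the standard asymptotic $\mathrm{erfi}(x)\sim e^{x^2}/(x\sqrt{\pi})$ as $x\to\infty$, this inner integral is asymptotic to $1/(d\alpha(\phi))$ pointwise in $\phi$. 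Since $\alpha$ is bounded away from zero on $[0,2\pi]$, the convergence of $d\cdot(\text{inner integral})$ to $1/\alpha(\phi)$ is uniform, and dominated convergence permits exchanging the limit with the outer integration.

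Combining these steps,
\begin{equation*}
\lim_{d\to\infty} d\,\bar S_{a,b,0}(d) = \frac{1}{4\pi}\int_0^{2\pi}\frac{d\phi}{a\cos^2\phi+b\sin^2\phi} = \frac{1}{4\pi}\cdot\frac{2\pi}{\sqrt{ab}} = \frac{1}{2\sqrt{ab}},
\end{equation*}
where the final $\phi$ integral is evaluated using the identity $a\cos^2\phi+b\sin^2\phi=\tfrac{a+b}{2}+\tfrac{a-b}{2}\cos(2\phi)$ together with the standard formula $\int_0^{2\pi}d\phi/(A+B\cos\phi)=2\pi/\sqrt{A^2-B^2}$ for $A>|B|$. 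The only delicate point is the interchange of limit and integration, which the uniform lower bound $\alpha(\phi)\geq\min(a,b)>0$ resolves routinely; I do not anticipate any genuine obstacle.
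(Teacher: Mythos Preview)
Your argument is correct. The reduction of the $\SO(3)$ average to a sphere integral via $\B=d\uu\uu^\mT$ is valid (the pushforward of Haar measure under $\R\mapsto\R\uu$ is the uniform measure on $S^2$), the choice of polar axis along the zero-eigenvalue direction of $\A$ cleanly factorises the quadratic form as $(1-u^2)\alpha(\phi)$, the $u$-integral is evaluated exactly, and the passage to the limit is properly justified: writing $s=d\alpha(\phi)$, the quantity $d\cdot(\text{inner integral})=f(s)/\alpha(\phi)$ with $f(s)=\sqrt{\pi s}\,e^{-s}\,\mathrm{erfi}(\sqrt{s})$ bounded and tending to $1$, and $s\ge d\min(a,b)\to\infty$ uniformly in $\phi$, so either uniform convergence or dominated convergence applies. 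The final azimuthal integral is standard.

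The paper defers its proof to Supplementary Section~V, which is not included in the provided source, so a line-by-line comparison is not possible. That said, your approach is fully in the spirit of the paper: the inner $u$-integral is (up to the trivial factor of $2$ from the $\phi$-average) precisely the axisymmetric case~2 signal \eqref{eq:AaxBaxRank1} with eigenvalues $(0,\alpha(\phi),\alpha(\phi))$, so you are effectively expressing the rank-2 $\A$ powder average as an azimuthal average of axisymmetric powder signals and reading off the asymptotics from the known $\mathrm{erfi}$ behaviour. This is the natural route and is almost certainly close to what the supplement does; in any case it is a complete and self-contained proof.
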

\begin{proof}
See \CYY{Supplementary Section V}. 
\end{proof}

\begin{figure}[h!]
\begin{center}
\includegraphics[trim={1cm 0 0 0},clip,scale=.5]{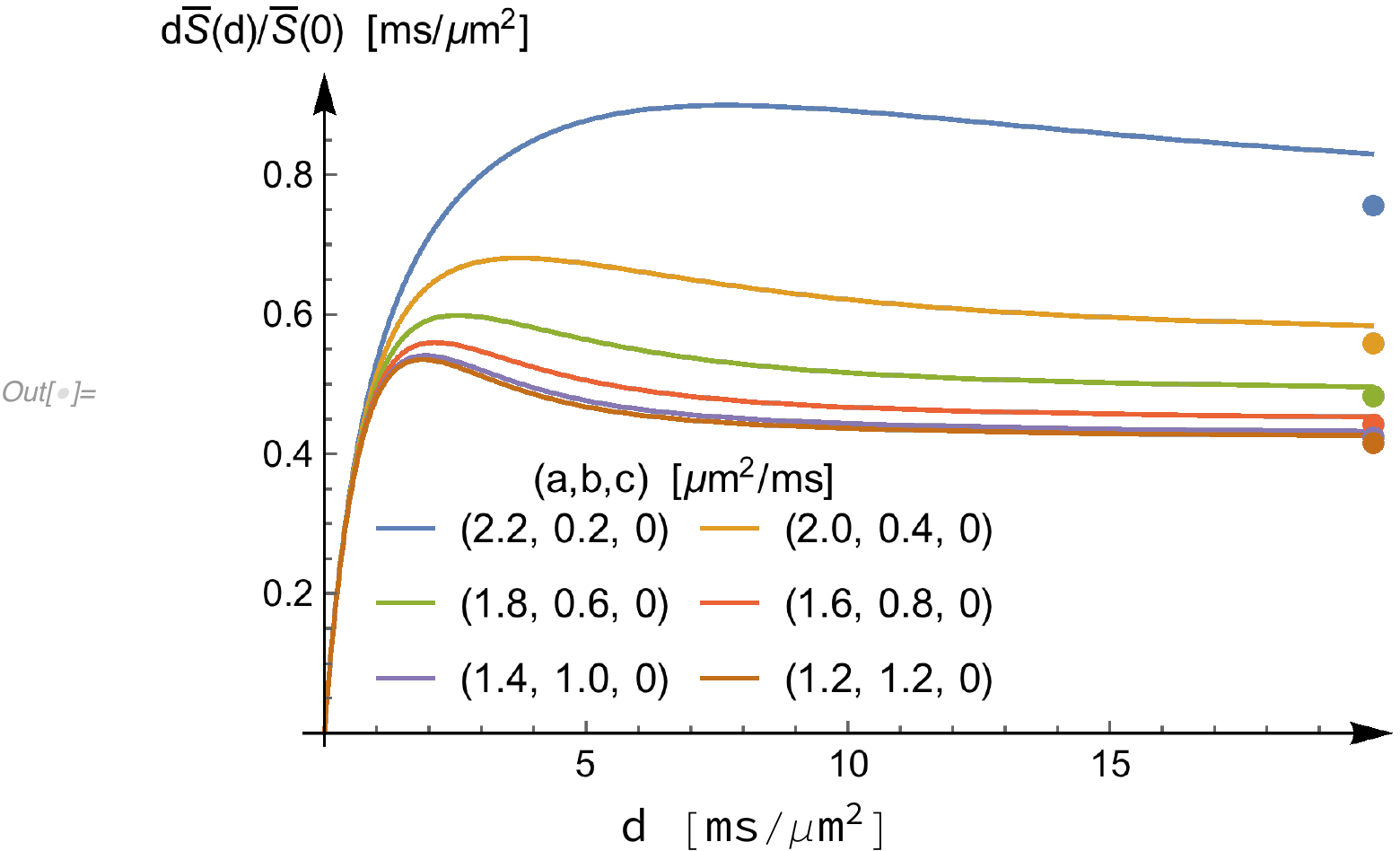}
\caption{
Given the rank-1 measurement tensor $\B$, with eigenvalues $(d,0,0)$, the product $d \bar S_{\A_i}(d)$ is plotted as a function of $d$ for six different diffusion tensors $\A_1, \A_2, \ldots \A_6$. These diffusion tensors all have the same trace, and their eigenvalues $(a,b,c)$ are in order
$\A_1:(2.2, 0.2, 0)$ $\mu$m${}^2/$ms,
$\A_2:(2.0, 0.4, 0)$ $\mu$m${}^2/$ms,
$\A_3:(1.8, 0.6, 0)$ $\mu$m${}^2/$ms,
$\A_4:(1.6, 0.8, 0)$ $\mu$m${}^2/$ms,
$\A_5:(1.4, 1.0, 0)$ $\mu$m${}^2/$ms,
$\A_6:(1.2, 1.2, 0)$ $\mu$m${}^2/$ms.
Using Theorem \ref{thm:kappa} $\lim_{d\to \infty} d \bar S_{\A_i}(d)$ are known, and the corresponding values are shown with dots.
The discrepancy between the curves and the corresponding dots are due to the finite values of $d$.
}
\label{fig:decay2}
\end{center}
\end{figure}
This limiting behaviour is illustrated in Figure \ref{fig:decay2} where, starting with a measurement tensor $ \B=d\begin{psmallmatrix} 1 &0 & 0 \\0 & 0 & 0 \\ 0 & 0 & 0 \end{psmallmatrix}$, plots of the signal $\bar S_{\A_i}(d)$ are shown for six different diffusion tensors $\A_i$, $i=1,2\ldots 6$.
The eigenvalues $(a,b,c)$ of these diffusion tensors are 
$\A_1:(2.2, .2, 0)$ $\mu$m${}^2/$ms,
$\A_2:(2.0, .4, 0)$ $\mu$m${}^2/$ms,
$\A_3:(1.8, .6, 0)$ $\mu$m${}^2/$ms,
$\A_4:(1.6, .8, 0)$ $\mu$m${}^2/$ms,
$\A_5:(1.4, 1.0, 0)$ $\mu$m${}^2/$ms,
$\A_6:(1.2, 1.2, 0)$ $\mu$m${}^2/$ms.
Note that all diffusion tensors have the same trace.
For each such tensor, the conditions of Theorem \ref{thm:kappa} are met, and the limits (as $d \to \infty$) $\tfrac{1}{2\sqrt{ab}}$  can be calculated.
These limits are shown with dots in Figure \ref{fig:decay2}. The discrepancy between the curves and the corresponding dots are due to the finite values of $d$.

Hence, for large $d$,  $\bar S_{a,b,0}(d) \propto \left[\kappa(a,b) \, d\right]^{-1}$ with $\kappa(a,b)=2\sqrt{a b}$,
which evaluates to $\Tr(\A)$ only for $a=b$.
 On the other hand, the small $d$ behaviour of the average signal is always $\bar S_{\A} \approx 1-\frac{d}{3}\Tr(\A)$. Thus, $\Tr(\A)$ can, in principle, be determined from the initial decay of the average signal. Any mismatch between this estimate and the estimate of $\kappa(a,b)$ from the power-law decay of the signal can thus be attributed to transverse anisotropy of $\A$. 

The above inference is a clear example to how the derived expressions involving general tensors can be utilized to gain new insight into the characterization of the local structure.

\subsection*{Estimation of $\A$ from the power series expansion of $\bar S$}\label{sec:estimationofD}

One of the main results of this paper consists of the series in
Eq.\ \eqref{eq:agenbgen}, which make it possible to calculate the
powder average $\bar S$ for general matrices $\A$ and $\B$ without
numerical integration or sampling in $\mathrm{SO}(3)$. As the
corresponding formula \eqref{eq:agenbaxi} for the case when one of the
matrices, $\B$, is axisymmetric is simpler, a relevant question is
whether the use of a general $\B$ is of any help in determining $\A$
from a set of measurements. This question immediately splits into two;
namely if there is an advantage with a general $\B$ in principle, and
also in practice. For instance, the asymptotics of $\bar S$ for
``large'' $\B$ may be of primary importance in principle, but in
practice the actual values of $\bar S$ in that regime may be so small
that noise and measurement errors make them impossible to use for determining $\A$.

To address the question above, we start by considering a general $\B$ and then the special case when
$\B$ is axisymmetric. Since any axisymmetric matrix $\B$ is the sum of an isotropic matrix and a rank-one matrix, and since the effect on $\bar S$ of an isotropic matrix is trivial, c.f.\ Eq.\ \eqref{eq:isotropic}, it is sufficient to consider rank-one matrices $\B$.

So, with a general $\A$, we consider a family of $\B$-matrices given by
$\B=\lambda \widetilde \B$, where 
$\widetilde \B=\begin{psmallmatrix}
d & 0 & 0 \\
0 & e & 0 \\
0 & 0 & f
\end{psmallmatrix}
$
is a fixed matrix, and $\lambda$ is a scalar strength parameter. For fixed $\A$ and $\widetilde \B$, the orientationally-averaged signal $\bar S_{\A}(\B)=\bar S_{\A}(\lambda\widetilde \B)=\bar S(\lambda)$ is a function of $\lambda$, and we seek the dependence for small $\lambda$.  Consider therefore the expansion $\bar S_{\A}(\lambda\widetilde \B)=1+c_1(\A,\widetilde \B)\lambda+c_2(\A,\widetilde \B)\lambda^2+c_3(\A,\widetilde \B)\lambda^3+ \mathcal{O}(\lambda^4)$, $ \lambda \to 0$. The coefficients $c_1, c_2, c_3$ can be estimated from the knowledge of $\{\bar S_{\A, \widetilde \B}(\lambda_i)\}_i$
for a suitable set of (small) $\{ \lambda_i \}_i$.
Up to third order (see \CYY{Supplementary Section III}) 
the coefficients $c_k$ are
\begin{equation}\label{eq:c123}
\begin{split}
c_1 =& -\frac{1}{3}\Tr(\A)\Tr(\widetilde \B)\\
c_2 =& \frac{1}{30}\left(2\Tr^2(\A)\Tr^2(\widetilde \B)+3\Tr(\A^2)\Tr(\widetilde \B^2)\right. \\ 
&\left. -\Tr^2(\A)\Tr(\widetilde \B^2)-\Tr(\A^2)\Tr^2(\widetilde \B)
\right)\\
c_3=&\frac{1}{630}\left[ \Tr(\A^3) \left(-36 \Tr(\widetilde \B^3)+36 \Tr(\widetilde \B^2) \Tr(\widetilde \B)-8\Tr^3(\widetilde \B)\right) \right.\\
&+\Tr(\A^2) \Tr(\A) \left(36 \Tr(\widetilde \B^3)-57 \Tr(\widetilde \B^2) \Tr(\widetilde \B)+15\Tr^3(\widetilde \B)\right)\\
&\left. +\Tr^3(\A) \left(-8 \Tr(\widetilde \B^3)+15 \Tr(\widetilde \B^2) \Tr(\widetilde \B)-8 \Tr^3(\widetilde \B)\right)\right].
   \end{split}
\end{equation}
For a generic matrix $\widetilde \B$ \CYY{(with the isotropic choice being the singular exception)}, knowledge of these three coefficients suffice to determine the matrix $\A$: For ease of illustration, consider the special case wherein $\widetilde \B$ is rank-1 (i.e., $e=f=0$) yielding the simplified coefficients
\begin{equation}\label{eq:Brank1}
\begin{split}
c_1 =& -\frac{d}{3}\Tr(\A)\\
c_2 =& \frac{d^2}{30}\left(\Tr^2(\A)+2\Tr(\A^2)\right)\\
c_3=&-\frac{d^3}{630}\left(\Tr^3(\A)+6\Tr(\A^2) \Tr(\A)+8 \Tr(\A^3)\right) \ .
\end{split}
\end{equation}
We see that this is a non-degenerate system that gives $\Tr(\A), \Tr(\A^2), \Tr(\A^3)$ in terms of $c_1, c_2, c_3$, and since (the eigenvalues of) $\A$ is determined from these three traces, $c_1, c_2$ and $c_3$ determine $\A$. 

Returning to a general $\widetilde \B$, a convenient point of view is to consider $\widetilde \B$ as decomposed into two parts (c.f., \CYY{Supplementary Section I}) 
and write, $\widetilde \B=\begin{psmallmatrix} \delta & 0 & 0
\\ 0 & \epsilon & 0 \\ 0 & 0 & 0
\end{psmallmatrix} + f \I$, where $\delta=d-f, \epsilon=e-f$.
Then the average signal \eqref{eq:S_avg} factors into two, each factor calculable from 
$\widetilde \B=\begin{psmallmatrix} \delta & 0 & 0 \\ 0 & \epsilon & 0 \\ 0 & 0 & 0 \end{psmallmatrix}$
 and $\widetilde \B = f \I$ separately.
We find that when
$\widetilde \B=\begin{psmallmatrix}
\delta & 0 & 0 \\
0 & \epsilon & 0 \\
0 & 0 & 0
\end{psmallmatrix}$, the expansion coefficients \eqref{eq:c123} attain the simplified form
\begin{equation}\label{eq:c123d}
\begin{split}
c_1 =& -\frac{1}{3}(\delta+\epsilon)\Tr(\A)\\ c_2 =&
\frac{1}{30}\left[(\delta^2+\epsilon^2)\left(\Tr^2(\A)+2\Tr(\A^2)\right)+\delta\epsilon \left(4
  \Tr^2(\A)-2\Tr(\A^2)\right)
  \right]\\ c_3=&-\frac{1}{630} \big[(\delta^3+\epsilon^3)\left(\Tr^3(\A)+6 \Tr(\A)
  \Tr(\A^2)+8 \Tr(\A^3)\right)\\ &+(\delta^2\epsilon+\delta\epsilon^2)\left(9 \Tr^3(\A)+12
  \Tr(\A)\Tr(\A^2) -12 \Tr(\A^3)\right) \big] \ ,
\end{split}
\end{equation}
while an isotropic $\widetilde \B = f \I$ on the other hand yields (renaming the coefficients)
\begin{equation}\label{eq:c123iso}
g_1 = -f\Tr(\A), \quad
g_2 = \frac{f^2}{2}\Tr^2(\A), \quad 
g_3=-\frac{f^3}{6}\Tr^3(\A) \ .
\end{equation}
The resulting signal expansion is the product $\bar{S} (\lambda) = [1
  +c_1 \lambda + c_2 \lambda^2+c_3 \lambda^3 +\mathcal{O}(\lambda^4)]
[1+g_1 \lambda + g_2 \lambda^2+g_3 \lambda^3
  +\mathcal{O}(\lambda^4)]$. Note how the isotropic part is not all that helpful: It attenuates the signal without being sensitive to anything other than $\Tr(\A)$. 

When using a general $\widetilde \B$ as described above, Eq.\ \eqref{eq:c123d} tells us that  $\A$ cannot be determined from $c_1$ and $c_2$ alone, even if  we are using the extra degree of freedom in $\B$ by varying $\delta$ and $\epsilon$. This is so since $c_1$ and $c_2$ contain information only on two invariants ($\Tr(\A)$ and $\Tr(\A^2)$), which is not sufficient to determine $\A$, which has three eigenvalues. That being said, in practice, varying $\delta$ and $\epsilon$ may be a way of getting ``independent'' measurements to reduce the influence of noise, i.e., increase SNR and therefore provide more reliable estimates of the coefficients $c_k$.

Let us also mention that in the situation described by Eq.\ \eqref{eq:c123d} where the combined freedom of $\delta$ and $\epsilon$ offers no extra information, one can still vary these to test the assumption that the specimen is indeed described by a single diffusivity matrix $\A$. Namely, for different values of $\delta$ and $\epsilon$, one should always get the same estimated matrix $\A$.

\subsection*{Standard model of white-matter with a general diffusion tensor for the extracellular compartment}

The extra degree of freedom provided by the parameters $\delta$ and $\epsilon$ may be crucial in certain relevant situations. In principle, such  situations can be addressed by involving higher order coefficients $c_4, c_5, \ldots$ but reliably estimating them would get increasingly difficult, and using $c_1, c_2, c_3$ for various $\delta$ and $\epsilon$ values is more robust. We give the following example, which could be relevant for simplified models of white-matter microstructure.

Suppose that our specimen is a mixture of two different substances, in unknown proportions, where one of the substances is (from a diffusivity perspective) a ``stick''. Thus, in proportions $p$ and $1-p$, with $0 \leq p \leq 1$, we have the unknown diffusivity matrices $\A \sim \begin{psmallmatrix}
a & 0 & 0 \\
0 & b & 0 \\
0 & 0 & c
\end{psmallmatrix}$ and $\widetilde \A \sim \begin{psmallmatrix}
q & 0 & 0 \\
0 & 0 & 0 \\
0 & 0 & 0
\end{psmallmatrix}$. This model is similar to the commonly employed white-matter model (sometimes referred to as the ``standard model'' \cite{Novikov18NBMreview}), but differs from it in two ways: (i) we ignore the isotropic compartment, whose contribution to the orientationally-averaged signal is simply an exponential (see \eqref{eq:isotropic}), (ii) the contribution from the extracellular matrix is given by a general diffusion tensor, which is not necessarily axially-symmetric.

In this model, we therefore have five unknowns to determine; three invariants of $\A$ together with $p$ and $q$. Since $\Tr(\widetilde \A^k)=q^k, k=1,2,3, \ldots$, Eq.\ \eqref{eq:c123d} now becomes
\begin{equation}\label{eq:c123dtwo}
\begin{split}
(\delta+\epsilon)\left(p\Tr(\A) +(1-p)q\right) & = -3 c_1(\delta,\epsilon) \\
(\delta^2+\epsilon^2) \left[ p\left(\Tr^2(\A)+2\Tr(\A^2)\right)+(1-p)3q^2 \right]\\
+\delta\epsilon \left[p \left(4 \Tr^2(\A)-2\Tr(\A^2) \right)+ (1-p)2q^2 \right]& = 30 c_2(\delta,\epsilon)\\
(\delta^3+\epsilon^3) \left[ p \left(\Tr^3(\A)+6 \Tr(\A) \Tr(\A^2)+8 \Tr(\A^3) \right)+(1-p)15q^3 \right]\\
+(\delta^2\epsilon+\delta\epsilon^2) \left[p \left( 9 \Tr^3(\A)+12 \Tr(\A)\Tr(\A^2) -12 \Tr(\A^3) \right)+(1-p)9q^3 \right]&=-630 c_3(\delta,\epsilon).
\end{split}
\end{equation}
By varying $\delta$ and/or $\epsilon$, these equations determine $p,q, \A$ ``almost
uniquely'', in the sense that there are (rare) situations where there
are two sets of acceptable solutions to Eq.\ \eqref{eq:c123dtwo}.
However, if one also adds the measurements with an isotropic
measurement tensor, this possible ambiguity goes away.  (See \CYY{Supplementary Section IV} 
for details.) Thus, it is possible to obtain all 5 unknowns of a multi-compartment white-matter model from the first three terms of the power-series representation of the orientationally-averaged signal.

\section*{Discussion and Conclusions}

\CYY{Our elaborations on the numerical behaviour of the main results \eqref{eq:agenbaxi} and \eqref{eq:agenbgen} were restricted to the case where one matrix is axisymmetric, corresponding to Eq.\ \eqref{eq:agenbaxi}. In the more general case corresponding to Eq. \eqref{eq:agenbgen}, the same guidelines apply regarding how the ordering of the eigenvalues affects the numerical behaviour, with some additional caveats. First of all, since no eigenvalues necessarily coincide, the number of possible distinguishable orderings increases, by a factor of 6 in particular. Also, not only one has to choose between the three alternatives (\ref{eq:agenbgen}c--e) by the same guidelines that apply to the alternatives (\ref{eq:agenbaxi}a--d), but also the number of coefficients $q_{mk}$ has to be specified.}

\CYY{A more brute-force approach to calculate the orientational average \eqref{eq:S_avg} is to set up the necessary integrations in the space of rotations and perform them by some discretization scheme. However, such an approach would require the integrand $e^{-\Tr (\A \R \B \R^\mT)}$ to be relatively well-behaved in the integration domain in order to be accurate, which is not always the case. For instance, when the matrices $\A$ and $\B$ are very similar to each other, and with one eigenvalue dominating the others, the integrand is (virtually) zero for most rotations $\R$, with most of the contribution stemming from a small subset. In such cases, a discretization of the average is not reliable.}


In this work, we represented local diffusion by employing a diffusion tensor along with the generalization \cite{Karlicek80} of the Stejskal-Tanner formula \cite{Stejskal65} for the signal contribution of each microdomain. The underlying assumption is that diffusion in each and every microdomain is unrestricted. This assumption \cite{Jian07} has been the building block of not only the microstructure models mentioned above, but also the techniques \cite{Westin14miccai,Lasic14,Szczepankiewicz15,Westin16_QTI} developed within the multidimensional diffusion MRI framework. Introduction of the confinement tensor concept \cite{Yolcu16,Ozarslan17FiP} provides a viable direction that could achieve the same by accounting for the possible restricted character of the microdomains. This is the subject of future work.

Another limitation of the present work is the assumption that there is no variation in the size of the microdomains making up the complex environment---the same assumption employed in many of the microstructure models. Previous studies \cite{Yablonskiy03,OzarslanNJP11} suggest the complexity of accounting for such variations in different contexts. We intend to address this issue in the future. 



In conclusion, we studied the orientationally-averaged magnetic resonance signal by extending the existing expressions to cases involving general tensors with no axisymmetry. This was accomplished by evaluating a challenging average (Eqs. \eqref{eq:S_avg}), or equivalently the integral in \eqref{eq:Laplace} for the special class of $p(\A)$ distributions considered in this article. Although the results are given as sums of infinitely many terms, we showed that with certain arrangements of the parameters, obtaining very accurate estimates is possible by retaining a few terms in the series. These developments led to a number of interesting inferences on the properties of the signal decay curve as well as estimation of relevant parameters from the signal. 

The findings presented in this work could be useful in many contexts in which the the expression  \eqref{eq:S_avg} (or \eqref{eq:Laplace}) emerges. For example, though we employed the nomenclature of diffusion MR in this paper, our findings are applicable to solid-state NMR spectroscopy as well due to the mathematical similarities of the two fields.  



\bibliography{../../sharedbib.bib}



\section*{Acknowledgements}

The authors thank Daniel Topgaard for his comments on the existing literature and acknowledge the following sources for funding: Swedish Foundation for Strategic Research AM13-0090, the Swedish Research Council CADICS Linneaus research environment, the Swedish Research Council 2015-05356 and 2016-04482, Link\"oping University Center for Industrial Information Technology (CENIIT), VINNOVA/ITEA3 17021 IMPACT, the Knut and Alice Wallenberg Foundation project``Seeing Organ Function,'' and National Institutes of Health P41EB015902.


\section*{Author contributions statement}

M.H.\ derived the main results. M.H., C.Y., and E.\"O.\ collaborated in application to specific cases and writing the manuscript, with inputs from H.K.\ and CF.W. All authors reviewed the manuscript.

\section*{Additional information}

\noindent \textbf{Supplementary information} accompanies this paper.

\noindent \textbf{Competing interests:} The authors declare no competing interests.




\end{document}



\title{{\bfseries Orientationally-averaged diffusion-attenuated magnetic resonance signal for locally-anisotropic diffusion: Supplementary information} }

\author{Magnus Herberthson}
\email[Electronic address: ]{magnus.herberthson@liu.se}
\affiliation{Dept.\ of Mathematics, Link\"oping University, Link\"oping, Sweden}
\author{Cem Yolcu}
\affiliation{Dept.\ of Biomedical Engineering, Link\"oping University, Link\"oping, Sweden}
\author{Hans Knutsson}
\affiliation{Dept.\ of Biomedical Engineering, Link\"oping University, Link\"oping, Sweden}
\author{Carl-Fredrik Westin}
\affiliation{Dept.\ of Biomedical Engineering, Link\"oping University, Link\"oping, Sweden}
\affiliation{Laboratory for Mathematics in Imaging, Dept.\ of Radiology, Brigham and Women's Hospital, Harvard Medical School, Boston, MA, USA}
\author{Evren \"Ozarslan}
\affiliation{Dept.\ of Biomedical Engineering, Link\"oping University, Link\"oping, Sweden}
\affiliation{Center for Medical Image Science and Visualization, Link\"oping University, Link\"oping, Sweden}





\maketitle 




\renewcommand{\theequation}{S\arabic{equation}}

\section{Calculation of the integrals} \label{appendix:derivation}

In this appendix, we derive the formulas (5--10). 
After some setup and general remarks, we start by deriving Eq.\ (8), 
i.e., the case when one matrix is axisymmetric, in \ref{appendix:agenbaxi}. From
this the formulas (5--7) 
follow as special cases.

The result (8) 
for the case involving one axisymmetric matrix is in
itself a special case of the general result (10) 
where both $\A$ and $\B$ are general. However, this situation is more
involved and is addressed in \ref{appendix:agenbgen}.

Note that as Eqs.\ (2) stand, it is enough that either $\A$ or $\B$ is
symmetric, as $\Tr(\R^\mT \A \R \B ) = 0$ if $\A$ is symmetric and $\B$ is
antisymmetric, or vice versa. Also, by the physical set up, it is
natural to consider all the eigenvalues of $\A$ and $\B$ to be non-negative,
but this is not insisted on in the calculations. Moreover, throughout
the article, fractional factorials are defined via the Gamma function,
i.e., $z! = \Gamma (z + 1)$.

We start by rewriting 
(in some basis which is not important)
\begin{align}
\A &=c \I+ \left(\begin{smallmatrix} 
\alpha & 0 & 0 \\
0 & \beta & 0 \\
0 & 0 & 0
\end{smallmatrix} \right), \, \text{with} \nonumber \\
\alpha &= a-c \ , \nonumber \\
\beta &= b-c \nonumber \ ,
\end{align}
and similarly (in perhaps a different basis)
\begin{align*}
\B &=
f\I+ \left(\begin{smallmatrix}
\delta & 0 & 0 \\
0 & \epsilon & 0 \\
0 & 0 & 0
\end{smallmatrix} \right), \, \text{with} \\
\delta &= d-f \ , \\
\epsilon &= e-f \ .
\end{align*}
Thus, $B=f\I+\delta \uu \uu^\mT+\epsilon \vv \vv^\mT$ for some pair of orthogonal unit vectors $\uu,\vv$. Using this, we find that
\begin{align}
  \Tr( \A \R \B \R^\mT) &=c(d+e)+f(a+b)-cf+
\Tr \!\left[ \left(\begin{smallmatrix}
\alpha & 0 & 0 \\
0 & \beta & 0 \\
0 & 0 & 0 
\end{smallmatrix} \right)
  \R (\delta \uu \uu^\mT+\epsilon \vv \vv^\mT) \R^\mT \right] \nonumber
\end{align}
Noting that $\Tr (\A {\bm x} {\bm x}^\mT)={\bm x}^\mT \A {\bm x}$, and defining
\begin{align}
  Q_0 = c(d+e) +f(a+b) -cf \ , \label{eq:Q0}
\end{align}
the original expression (2) 
for the powder-averaged signal becomes
\begin{equation} \label{eq:uvexp}
\bar{S} = e^{-Q_0} \left< e^{- \delta (\R \uu)^\mT \left( \begin{smallmatrix}
\alpha & 0 & 0 \\
0 & \beta & 0 \\
0 & 0 & 0 
\end{smallmatrix} \right) 
\R \uu -\epsilon (\R \vv)^\mT \left( \begin{smallmatrix}
\alpha & 0 & 0 \\
0 & \beta & 0 \\
0 & 0 & 0 
\end{smallmatrix} \right)
\R \vv}\right>_{\R\in \mathrm{SO(3)}} .
\end{equation}

The mean over rotation matrices $\R$ will be taken in the following
way (this is related to the Hopf fibration \cite{Hopf31,Lyons03} of $S^3$, with $S^3$ being the double cover of SO(3)). 

Given the
two orthogonal unit vectors $\uu$ and $\vv$, a rotation matrix $\R$ is
determined by the (mutually orthogonal) images $\R \uu$ and $\R \vv$.
Thus $\uu$ will run over all unit vectors and for each instance, $\vv$
will run over all possible directions orthogonal to it.  Using
standard spherical coordinates, we let
\begin{align}
  \uu =
\begin{psmallmatrix}
\sin\theta \cos\phi \\
\sin\theta \sin\phi \\
\cos\theta
\end{psmallmatrix}, \, 
\vv=\begin{psmallmatrix}
-\sin\phi \\
\cos\phi \\
0
\end{psmallmatrix},
\end{align}
and let $\R_\uu(\mu)$ denote a rotation matrix which rotates an angle $\mu$ around $\uu$.
From the exponent in Eq.\ \eqref{eq:uvexp}, we put 
\begin{equation}\label{eq:Q1}
\begin{split}
Q_1 &=\delta (\R_\uu(\mu) \uu)^\mT \begin{psmallmatrix}
\alpha & 0 & 0 \\
0 & \beta & 0 \\
0 & 0 & 0 
\end{psmallmatrix} 
\R_\uu(\mu) \uu=
\delta \uu^\mT \begin{psmallmatrix}
\alpha & 0 & 0 \\
0 & \beta & 0 \\
0 & 0 & 0 
\end{psmallmatrix} 
\uu \\
&=\delta \sin^2\theta(\alpha \cos^2\phi+\beta \sin^2\phi)=\delta (\alpha+(\beta-\alpha) \sin^2\phi)\sin^2\theta,
\end{split}
\end{equation}
which is independent of $\mu$, 
and
\begin{equation}\label{eq:Q2}
\begin{split}
Q_2= \epsilon \left(\R_\uu(\mu) \vv
\right)^\mT
\begin{psmallmatrix}
\alpha & 0 & 0 \\
0 & \beta & 0 \\
0 & 0 & 0 
\end{psmallmatrix} 
\R_\uu(\mu) 
\vv,
\end{split}
\end{equation}
where it easy to check that
$$
R_\uu(\mu) \vv=
R_\uu(\mu) \begin{psmallmatrix} -\sin\phi \\ \cos\phi \\ 0 \end{psmallmatrix}
=
\begin{psmallmatrix}
 -\sin \mu \cos \phi \cos \theta-\cos\mu  \sin \phi  \\
 \cos\mu \cos\phi-\sin\mu\sin\phi\cos\theta \\
 \sin\mu \sin\theta \\
\end{psmallmatrix} .
$$
Now the average \eqref{eq:uvexp} can be rewritten as, 
\begin{equation}\label{eq:tripint}
\begin{split}
\bar{S}
&=\frac{e^{-Q_0}}{(4\pi) (2\pi)} \int\limits_0^{2\pi}\int\limits_0^\pi
e^{-Q_1} \int\limits_0^{2\pi} e^{-Q_2} \, \dd\mu
\sin\theta \, \dd\theta \, \dd\phi \ .
\end{split}
\end{equation}
Evaluation of the angular average in the above expression in the
general case is provided in \ref{appendix:agenbgen}, while the
resulting expressions were given in \CYY{{\slshape Results}}.

However, we start by looking at the special case with one matrix, say
$\B$, being axisymmetric.  One can then arrange its eigenvalues so
that $\epsilon=0$, which yields $Q_2=0$ and means that the angular
average \eqref{eq:tripint} takes the form
\begin{align}\label{eq:tvaint}
\bar{S} &= \frac{e^{-Q_0}}{4\pi}\int \limits_0^{2\pi}\int\limits_0^\pi 
e^{-Q_1} \sin\theta \, \dd\theta \, \dd\phi \nonumber \\
&= \frac{e^{-Q_0}}{4\pi}\int \limits_0^{2\pi}\int\limits_0^\pi 
e^{-\delta (\alpha+(\beta-\alpha) \sin^2\phi)\sin^2\theta}\sin\theta \, \dd\theta \, \dd\phi\ ,
\end{align}
as a result of the $\mu$ integral dropping out since $Q_2$ is the
only place where $\mu$-dependence might have been present.  Moreover,
one of the integrations can readily be reduced by using either
$$
\frac{1}{2\pi}\int_0^{2\pi}e^{-a\sin^2\phi} \,\dd\phi=e^{-a/2}I_0(\tfrac{a}{2})
$$
or 
$$
\frac{1}{\pi}\int_0^{\pi}e^{-a\sin^2\theta}\sin\theta \, \dd\theta=
\frac{e^{-a}}{\sqrt{a\pi}} \text{erfi}\left(\sqrt{a}\right).
$$
Here $I_0$ is a modified Bessel function of the first kind (with order
$0$) and $\text{erfi}$ is the imaginary error function,
$\text{erfi}(x)=\text{erf}(i x)/i$. Hence, with one of the matrices
$\A$ and $\B$ axisymmetric, it is easy to write the orientational
average (2) 
as definite integral of one
variable. However, aiming for the result (8), 
where we
have expressed \eqref{eq:tvaint} as a series rather than an integral,
we proceed as follows.

\subsection{$\A$ general, $\B$ axisymmetric}\label{appendix:agenbaxi}

Suppose that one matrix, say $\B$, is axisymmetric, and that we let the two equal eigenvalues be $e=f$, so that $\epsilon=0$. As noted in above, 
$\bar S$ is then given by Eq.\ \eqref{eq:tvaint}.
Putting for brevity $\zeta=\delta\alpha$ and $\gamma=\delta(\beta-\alpha)$, the integrand has the form 
\begin{align*}
e^{-(\zeta+\gamma\sin^2\phi)\sin^2\theta} &=
\sum\limits_{n=0}^\infty\frac{(-\sin^2\theta)^n}{n!}(\zeta+\gamma\sin^2\phi)^n \\
&= \sum\limits_{n=0}^\infty\frac{(-\sin^2\theta)^n}{n!}\sum\limits_{k=0}^n
{n \choose k}\zeta^{n-k}\gamma^k\sin^{2k}\phi \ . 
\end{align*}
Now, it is easily verified that $\int\limits_{0}^{2\pi} \sin^{2k}\!\phi \, \dd\phi=2\pi4^{-k}{2k \choose k}$ and that
$\int\limits_{0}^{\pi} \sin^{2n+1}\!\theta \, \dd\theta=\frac{\sqrt{\pi}n!}{(n+1/2)!}$. Hence \eqref{eq:tvaint} becomes (assuming for the moment $\alpha\neq 0$)
\begin{equation}\label{eq:appsum1}
\bar{S} = e^{-Q_0} \frac{\sqrt{\pi}}{2}\sum\limits_{n=0}^\infty\frac{(-\alpha\delta)^n}{(n+1/2)!}
\sum\limits_{k=0}^n{n \choose k}{2k \choose k}\left(\frac{\beta-\alpha}{4\alpha}\right)^k .
\end{equation}
From the definition of the hypergeometric function $\, _2 F_1$ (see \ref{appendix:hyp1F1}),  it is 
also straightforward to check that the inner sum can be identified as,
$_2F_1 \!\left( \tfrac{1}{2},-n; 1; \tfrac{\alpha-\beta}{\alpha} \right)$.
Thereby we obtain
\begin{align}
  \bar{S} &= e^{-Q_0} \frac{\sqrt{\pi}}{2}\sum\limits_{n=0}^\infty\frac{(-\alpha\delta)^n}{(n+1/2)!}
      {}_2F_1 \!\left( \tfrac{1}{2},-n; 1; \tfrac{\alpha-\beta}{\alpha} \right)
      \end{align}
Substituting the value \eqref{eq:Q0} of $Q_0$, as well as $\alpha$,
$\beta$, and $\delta$, we obtain the first form (8a) 
when $\alpha\neq 0$, i.e., $a\neq c$. The case $\alpha=a-c=0$ can be
handled explicitly from the start, or through
continuity arguments using \eqref{eq:zlimit}, in the form $\lim_{x \to 0}(-x)^n {}_2
F_1(\tfrac{1}{2},-n;1;\tfrac{y}{x})=
\tfrac{(n-1/2)!}{\sqrt{\pi}n!}y^n$, which leads to (8a) 
when $a=c$ upon recognizing the Taylor series for
the error function.

The form (8b) 
is obtained upon changing the order of
summation in Eq.\ \eqref{eq:appsum1}. We obtain
\begin{equation*}
\bar{S} = e^{-Q_0} \frac{\sqrt{\pi}}{2}\sum\limits_{k=0}^\infty{2k \choose k} 
\left(\frac{\beta-\alpha}{4\alpha}\right)^k\sum\limits_{n=k}^\infty\frac{(-\alpha\delta)^n}{(n+1/2)!}{n \choose k} \ ,
\end{equation*}
and from the relation
$$
\sum\limits_{n=k}^\infty\frac{(-\alpha\delta)^n}{(n+1/2)!}{n \choose k}=
\frac{(-\alpha \delta )^k }{\left(k+1/2\right)!}\, _1F_1\left(k+1;k+\tfrac{3}{2};-\alpha\delta \right)
$$
the second form (8b) 
follows (after renaming the summation index $k\to n$). 

Another way the sum \eqref{eq:appsum1} can be rewritten is via the change of summation indices, $m=n-k$, which yields, 
\begin{equation*}
\bar{S} = e^{-Q_0} \frac{\sqrt{\pi}}{2}\sum\limits_{m=0}^\infty\sum\limits_{k=0}^\infty\frac{(-\alpha \delta)^{m}(-1)^k}{(m+k+1/2)!}
{m+k \choose k}{2k \choose k}\left[\frac{\delta(\beta-\alpha)}{4}\right]^k \ .
\end{equation*}
The identity
$$
\sum_{k=0}^{\infty }
\frac{\left[\tfrac {\delta(\alpha-\beta)}{4} \right]^k\binom{2 k}{k} \binom{k+m}{k}}{\left(k+m+\frac{1}{2}\right)!}=
\frac{\, _2F_2 \!\left(\frac{1}{2},m+1;1,m+\frac{3}{2};\delta(\alpha-\beta) \right)}{(m+ \tfrac 12)!}
$$
then gives the form (8c) 
similarly.

\subsubsection*{Special cases}

All expressions (8) 
for the powder-averaged signal of course give the same result, and $\bar S$ is also invariant under permutations of $a,b,c$. One can exploit this to derive the special cases where the matrix $\A$ is also axisymmetric, that is, it has two eigenvalues coinciding.

Swapping $a$ and $c$ thanks to the aforementioned invariance,
and then putting $b=c$ in Eq.\ (8a), 
we get 
$$
\bar S  = e^{-a d-2 f c} \frac{\sqrt{\pi}}{2} \sum_{n=0}^a \frac{[(a-c)(d-f)]^n }{\left(n+1/2\right)!},
$$
which corresponds to \CYY{case 3 of {\slshape Results}} when both $\A$ and $\B$ are axisymmetric.
The identity
$$
\sum_{n=0}^\infty\frac{x^n}{(n+1/2)!}=
\frac{e^x \text{erf}\left(\sqrt{x}\right)}{\sqrt{x}}=
\frac{e^{x} \text{erfi}\left(\sqrt{-x}\right)}{\sqrt{-x}}
$$
then yields the associated result (7). 

The case where $\B$ is not only axisymmetric but also rank-1 follows by putting $f=0$, which gives Eq.\ (6). 

The simplest case of isotropic $\B$ can be verified as follows: As $\epsilon=0$ already, isotropy implies $\delta=0$, and the expressions in (8) 
readily give Eq.\ (5), 
due to all but the $n=0$
term vanishing in the series expansions.

\subsection{$\A$ and $\B$ both general}\label{appendix:agenbgen}

The problem now is to calculate the integral in
Eq.\ \eqref{eq:tripint}, where we no longer assume that
$\epsilon=0$. As a result, the integrand of Eq.\ \eqref{eq:tripint}
becomes dependent on $\mu$ through $Q_2$, and the $\mu$ integration
does not drop, contrary to the previous case.

From \eqref{eq:Q2} and the expression for $R_\uu(\mu) \vv$, we find that
\begin{equation}\label{eq:Q2B}
\begin{split}
Q_2 &=\frac{\epsilon}{2}[\tau+\rho \cos(2\mu)+\sigma\sin(2\mu)], \mbox{ where}\\
\tau &= \alpha+\beta-\left[\alpha+(\beta-\alpha) \sin ^2\phi \right]\sin ^2\theta\\
\rho &=-\alpha+\beta+2(\alpha-\beta) \sin ^2\phi+\left[\alpha+(\beta-\alpha) \sin ^2\phi \right]\sin ^2\theta\\
\sigma &=2 (\alpha-\beta) \sin\phi \cos\phi \cos\theta
\end{split}
\end{equation}
Hence, integrating over $\mu$, we find
\begin{equation}\label{eq:muint}
\begin{split}
\frac{1}{2\pi}\int_0^{2\pi}e^{-Q_2} \, \dd\mu
=\frac{e^{-\frac{\epsilon}{2}\tau}}{2\pi}\int_0^{2\pi}e^{-\frac{\epsilon}{2}[\rho \cos(2\mu)+\sigma \sin(2\mu)]} \, \dd\mu\\
=\frac{e^{-\frac{\epsilon}{2}\tau}}{2\pi}\int_0^{2\pi}e^{-\frac{\epsilon}{2}\sqrt{\rho^2+\sigma^2} \cos(2\mu)} \,\dd\mu
=e^{-\frac{\epsilon}{2}\tau} I_0\! \left(\tfrac{\epsilon}{2}\sqrt{\rho^2+\sigma^2}\right)
\end{split}
\end{equation}
where we have used $\int_0^{2\pi}e^{\pm A\cos(2\mu)} \,\dd\mu=2\pi
I_0(A)$, $I_0$ being the zeroth order modified Bessel function of the
first kind.  Inserting \eqref{eq:muint} into \eqref{eq:tripint}, the
integral is now
$$
\bar{S} = \frac {e^{-Q_0}}{4\pi} \int\limits_0^{2\pi}\int\limits_0^\pi
e^{-\frac{\epsilon}{2}\tau-\delta (\alpha+(\beta-\alpha) \sin^2\phi)\sin^2\theta}
I_0 \!\left(\tfrac{\epsilon}2 \sqrt{\rho^2+\sigma^2} \right) \sin\theta \, \dd\theta \, \dd\phi.
$$
For the argument of the exponential function, we have
\begin{equation}\label{eq:exparg}
\begin{split}
-\tfrac{\epsilon}{2}\tau &-\delta (\alpha+(\beta-\alpha) \sin^2\phi)\sin^2\theta\\
&=-\tfrac{\epsilon}{2}(\alpha+\beta)-(\delta+\tfrac{\epsilon}{2}) (\alpha+(\beta-\alpha) \sin^2\phi)\sin^2\theta\\
&=-\tfrac{\epsilon}{2}(\alpha+\beta)-(A+B\sin^2\phi)\sin^2\theta
\end{split}
\end{equation}
where $A=(\delta-\frac{\epsilon}{2})\alpha=\frac{1}{2} (a-c) (2 d-e-f)$, $B=(\delta-\frac{\epsilon}{2})(\beta-\alpha)=\frac{1}{2} (b-a) (2 d-e- f)$. Note that $I_0 (\frac{\epsilon}2 \sqrt{\rho^2+\sigma^2}) = I_0(\frac{|\epsilon|}2 \sqrt{\rho^2+\sigma^2})$ since $I_0$ is even.

For the argument of $I_0$ in \eqref{eq:muint} we find that
\begin{equation*}
\begin{split}
\rho^2+\sigma^2=(\alpha-\beta)^2+2 (\beta-\alpha)\left(\alpha-(\alpha+\beta) \sin ^2\phi\right)\sin ^2\theta
+\left(\alpha+(\beta-\alpha)  \sin ^2\phi\right)^2\sin ^4\theta
\end{split}
\end{equation*}
which means that
\begin{equation*}
\begin{split}
&\tfrac{|\epsilon|}{2}\sqrt{\rho^2+\sigma^2}=
\sqrt{C+(D+E\sin^2\phi)\sin^2\theta+(F+G\sin^2\phi)^2\sin^4\theta},
\end{split}
\end{equation*}
where $C=\frac{\epsilon^2}{4}(\alpha-\beta)^2=\frac{(e-f)^2}{4}(a-b)^2$, 
$D=\frac{\epsilon^2}{2}(\beta-\alpha)\alpha=\frac{(e-f)^2}{2}(b-a)(a-c)$, 
$E=\frac{\epsilon^2}{2}(\alpha^2-\beta^2)=\frac{(e-f)^2}{2}(a-b)(a+b-2c)$, 
$F=\frac{\epsilon\alpha}{2}=\frac{(e-f)(a-c)}{2}$, and 
$G=\frac{\epsilon(\beta-\alpha)}{2}=\frac{(e-f)(b-a)}{2}$.
Note that $C\geq0$.

Extracting the $\phi$- and $\theta$-independent part of
\eqref{eq:exparg}, and putting
$$e^{-Q_0'}=e^{-Q_0} e^{-\frac{\epsilon}{2}(\alpha+\beta)}=e^{-\frac{1}{2} (a+b) (e+f)-c  d}$$
the integral boils down to
\begin{equation}\label{eq:Xint}
\begin{split}
\bar{S} &= \frac{e^{-Q_0'}}{4\pi}\int\limits_0^{2\pi}\int\limits_0^\pi e^{-(A+B\sin^2\phi)\sin^2\theta}
I_0(X)\sin\theta \, \dd\theta \,\dd\phi, \\
X &= \sqrt{C+(D+E\sin^2\phi)\sin^2\theta+(F+G\sin^2\phi)^2\sin^4\theta} \ .
\end{split}
\end{equation}
The idea is now to combine the expansion from \ref{appendix:agenbaxi}:
\begin{equation}\label{eq:expexp}
\begin{split}
e^{-(A+B\sin^2\phi)\sin^2\theta}
=\sum\limits_{n=0}^\infty\sum\limits_{l=0}^n\frac{(-1)^n}{n!}
{n \choose l}A^{n-l}B^l\sin^{2n}\theta\sin^{2l}\phi
\end{split}
\end{equation}
with a Taylor expansion of $I_0(X)$. Namely, with $x=\sin^2 \phi, y=\sin^2 \theta$
we seek the Taylor expansion
\begin{equation}\label{eq:I0rel}
I_0(\sqrt{C+(D+E x)y+(F+G x)^2y^2})\!=\!\sum\limits_{k=0}^\infty\sum\limits_{m=0}^k q_{mk}x^m y^k
\end{equation}
where the particular form of the argument implies that the summation over $m$ (for each fixed $k$) runs
from $0$ to $k$. So, for each pair $(m,k)$ in this series and for each $(n,l)$ in the series \eqref{eq:expexp}
we use that
\begin{equation*}
\begin{split}
\frac{1}{4\pi}\int\limits_0^{2\pi}\int\limits_0^\pi \sin^{2n}\theta\sin^{2l}\phi \, x^m y^k\sin \theta \,\dd\theta \,\dd\phi
&=\frac{1}{4\pi}\int\limits_0^{2\pi}\int\limits_0^\pi \sin^{2(n+k)+1}\theta\sin^{2(l+m)}\phi \,\dd\theta \,\dd\phi\\
&=\frac{1}{2}4^{-m-l}{2(l+m) \choose l+m}\frac{\sqrt{\pi}(n+k)!}{(n+k+1/2)!}.
\end{split}
\end{equation*}
This means that \eqref{eq:Xint} takes the form
\begin{align}\label{eq:S_Y}
  \bar{S}=\sum\limits_{k=0}^\infty\sum\limits_{m=0}^k q_{mk} Y_{mk} \ ,
\end{align}
where
\begin{align} \label{eq:Ymk}
Y_{mk} = \sum\limits_{n=0}^\infty\sum\limits_{l=0}^n\frac{(-1)^n}{n!}
{n \choose l}A^{n-l}B^l
\frac{4^{-m-l}}{2}{2(l+m) \choose l+m}\frac{\sqrt{\pi}(n+k)!}{(n+k+1/2)!} \ .
\end{align}
This expression is comparable to \eqref{eq:appsum1}, which was explicitly summed in various ways to give the different expressions for $\bar S$ in Eq.\ (8). 
Here, we can proceed in the same way to obtain the three formulas given in Eq.\ (10), 
each corresponding to an alternative form $Y_{mk}^{(i)}$. 

The first alternative (10c) 
follows from taking the sum
over $l$, showing that $Y_{mk}$ can be written
\begin{equation}\label{Umk}
Y_{mk}^{(1)} (A,B) = \frac{\sqrt{\pi}4^{-m}}{2}\binom{2 m}{m}
\sum_{n=0}^\infty\frac{(-A)^n  (k+n)! \,
   _2F_1\left(m+\frac{1}{2},-n;m+1;-\frac{B}{A}\right)}
   {n! (k+n+1/2)!} \ ,
\end{equation}
where we recall $A=\frac{1}{2}(a-c)(2d-e-f)$ and $B=\frac{1}{2}(b-a)(2d-e-f)$.
This extends to $A=0$ by continuity via $\lim_{x\to 0}x^n \,
_2F_1\left(m+\frac{1}{2},-n;m+1;-\frac{1}{x}\right)=
\tfrac{(m+1/2)_n}{(m+1)_n}$, where $(a)_n$ is the Pochhammer symbol (see \ref{appendix:hyp1F1}), 
and leads to the exceptional case in (10c). 

Alternatively, rearranging
$\sum\limits_{n=0}^\infty\sum\limits_{l=0}^n=\sum\limits_{l=0}^\infty\sum\limits_{n=l}^\infty$
and summing over $n$ yields
\begin{equation}\label{Vmk}
Y_{mk}^{(2)}(A,B)=\frac{\sqrt{\pi}4^{-m}}{2}
\sum_{l=0}^\infty\frac{(-B)^l (k+l)! \binom{2 (l+m)}{l+m} \,
   _1F_1\left(k+l+1;k+l+\frac{3}{2};-A\right)}{4^l l! \left(k+ l+1/2\right)!} \ ,
\end{equation}
which is Eq.\ (10d) 
after renaming the summation index and substituting the values of $A$ and $B$.

Finally, one can rearrange the double summation in Eq.\ \eqref{eq:Ymk} by putting $n=\sigma+l$, yielding the third alternative (10e) 
after summation over $l$ as
\begin{equation}\label{Wmk}
Y_{mk}^{(3)}(A,B)=\frac{\sqrt{\pi}4^{-m}}{2}\binom{2 m}{m}
\sum_{\sigma=0}^\infty\frac{(-A)^{\sigma } (k+\sigma )! \,
   _2F_2\left(m+\frac{1}{2},k+\sigma +1;m+1,k+\sigma +\frac{3}{2};-B\right)}{\sigma !
   \left(k+\sigma +1/2\right)!} \ ,
\end{equation}
upon substituting $A$, $B$, and renaming the remaining summation index.

The derivation of the the expressions for $\bar S$ given in
Eq.\ (10) 
are thus complete once the form (10b) 
of the coefficients $q_{mk}$ in
Eq.\ \eqref{eq:S_Y} is demonstrated. These coefficients inherit their
form from the relation \eqref{eq:I0rel}.  We start by recalling that
(we will only need $m\in \mathbf{Z}$)
\begin{equation}\label{eq:Im}
I_m(x)=\sum_{n=0}^\infty\frac{1}{n!(n+m)!}\left(\frac{x}{2}\right)^{2n+m}.
\end{equation}
Hence, $I_m$ is even (odd) when $m$ is even (odd).
Again using $\frac{1}{n!}=0$ when $n$ is a negative integer, it also follows that $I_m=I_{-m}$.
Moreover, $m=0$ gives $
I_0(x)=\sum_{n=0}^\infty\frac{1}{n!^2}\left(\frac{x}{2}\right)^{2n}.
$
Using this and the trinomial expansion, we get
\begin{equation}\label{I0arg}
\begin{split}
&I_0(\sqrt{C+(D+Ex)y+(F+Gx)^2 y^2})=\sum\limits_{n=0}^\infty\frac{1}{4^n(n!)^2}[C+(D+Ex)y+(F+Gx)^2y^2]^n\\
&=\sum\limits_{n=0}^\infty\frac{1}{4^n(n!)^2}\times \hspace*{-2mm} \sum\limits_{n_1+n_2+n_3=n}\hspace*{-1mm}
{n \choose n_1 n_2 n_3}C^{n_1}(D+Ex)^{n_2}y^{n_2}(F+Gx)^{2n_3}y^{2n_3}
\end{split}
\end{equation}
where ${n \choose n_1 n_2 n_3}=\frac{n!}{n_1! n_2! n_3!}$. The coefficient (including powers of $x$) of $y^k$
is given by 
$$
\begin{cases}
n_1+n_2+n_3=n\\
n_2+2n_3=k
\end{cases}
\mbox{, i.e., }
\begin{cases}
n_1=n-k+n_3\\
n_2=k-2 n_3
\end{cases}.
$$
Replacing $n_1$, this coefficient is
\begin{equation}\label{eq:Ikn3}
\begin{split}
&\sum\limits_{n=0}^\infty\frac{1}{4^n n!}\! \! \sum\limits_{n_3=0}^{k/2}\! \!
\frac{C^{n-k+n_3}}{(n\!-\! k\! +\! n_3)!n_2!n_3!}(D+Ex)^{n_2}(F+Gx)^{2n_3}\\
&=\sum\limits_{n_3=0}^{k/2}\left(\tfrac{\sqrt{C}}{2}\right)^{k-n_3}
\hspace*{-.3ex} I_{k-n_3}(\sqrt{C})
\frac{C^{-k+n_3}}{n_2!n_3!}(D+Ex)^{n_2}(F+Gx)^{2n_3}.
\end{split}
\end{equation}
where we have used \eqref{eq:Im}.
From this we want to extract the coefficient of $x^m$. Now,
\begin{equation}\label{eq:DEFGpower}
\begin{split}
&(D+Ex)^{n_2}(F+Gx)^{2n_3}=\sum\limits_{\eta_2=0}^{n_2}
{n_2 \choose \eta_2}D^{n_2-\eta_2}E^{\eta_2}x^{\eta_2}
\sum\limits_{\eta_3=0}^{2n_3}{2n_3 \choose \eta_3}F^{2n_3-\eta_3}G^{\eta_3}x^{\eta_3},
\end{split}
\end{equation}
and hence the coefficient of $x^m$ is obtained when $\eta_2+\eta_3=m$. So, assuming that $\tfrac{EF}{DG}$ is well defined, and inserting 
$\eta_3=m-\eta_2$, the coefficient of $x^m$ in \eqref{eq:DEFGpower}
is
\begin{equation}\label{eq:DEFGpower2}
\begin{split}
&\sum\limits_{\eta_2=0}^{n_2}\!\!{n_2 \choose \eta_2}D^{n_2-\eta_2}E^{\eta_2}
{2n_3 \choose m-\eta_2}F^{2n_3-m+\eta_2}G^{m-\eta_2}\\
&=D^{n_2}F^{2n_3-m}G^m\sum\limits_{\eta_2=0}^{k-2n_3}\!\!
{n_2 \choose \eta_2}{2n_3 \choose m-\eta_2}\left(\frac{EF}{DG}\right)^{\eta_2}\\
&=D^{k-2n_3}F^{2n_3-m}G^m\sum\limits_{\eta_2=0}^{k-2n_3}\!\!
{k-2n_3 \choose \eta_2}{2n_3 \choose m-\eta_2}\left(\frac{EF}{DG}\right)^{\eta_2},
\end{split}
\end{equation}
where in the last equation we have used $n_2=k-2n_3$.

From the definition of the regularized hypergeometric function (see \ref{appendix:hyp1F1}) ${}_2\widetilde F_1 (\cdot)$
it follows that\footnote{ Valid if $x\neq 0$, otherwise the sum is ${2 n_3 \choose m }$.}
\begin{equation}\label{eq:hypFreg}
\begin{split}
&\sum\limits_{\eta_2=0}^{k-2n_3}\!\!
{k-2n_3 \choose \eta_2}{2n_3 \choose m-\eta_2}x^{\eta_2}
=x^{m-2 n_3}\frac{(k-2n_3)!}{(k-m)!}{}_2 \tilde F_1(m-k,-2n_3; 1 + m - 2 n_3;x).
\end{split}
\end{equation}
Combining \eqref{eq:Ikn3} (with $n_2=k-n_3$), \eqref{eq:DEFGpower2} and \eqref{eq:hypFreg} (with $x=\tfrac{EF}{DG}$)
we find that 
\begin{equation*}
\begin{split}
&q_{mk}=\left(\frac{E}{D}\right)^m\!\! \left(\frac{D}{2\sqrt{C}}\right)^k\!\! \frac{1}{(k-m)!}
\sum\limits_{n_3=0}^{k/2}\! \!
\frac{I_{k-n_3}(\sqrt{C})}{n_3!}
\!\!\left(\!\!\frac{2\sqrt{C}G^2}{E^2}\!\!\right)^{n_3}\!\!\!\!\!\!{}_2 \tilde F_1 \!\left(m\!-\!k,-2n_3; 1\! +\! m\! - 2 n_3;\frac{EF}{DG}\right).
\end{split}
\end{equation*}
Using the expressions for $C,D,E,F,G$, this (with $n_3=j$) gives
precisely the expression (10b) 
(when $a\neq b$ and $a+b-2c=0$) up to
the fact that $\sqrt{C}$ contains a modulus:
$\sqrt{C}=\frac{1}{2}|e-f|\cdot |a-b|$.  However, the parity of
$I_{k-n_3}$ together with the powers of $\sqrt{C}$ shows that the
modulus may be removed and hence the desired form of $q_{mk}$ is
established when $\tfrac{EF}{DG}\neq 0$, which corresponds to $a \neq b$ and
$a+b-2c \neq 0$.

When $a=b$, this implies that $C=D=E=G=0$, and inserting this in \eqref{I0arg}, we readily get the expression 
(10b) 
for the case $a=b$. Moreover, by putting $a=b$ in the formulas 
(10c--e), 
they all become (possibly after a summation) 
$
\tfrac{k!}{(k+1/2)!}{}_1 F_1(1+k; \tfrac{3}{2}+k; \tfrac{1}{2}(c-a)(2d-e-f))
$
so that (8d) 
follows. The case $a+b-2c=0$ can be handled by a limit procedure or by using the footnote preceding
Eq.~\eqref{eq:hypFreg}. Either way, this establishes (10b) 
when $a+b-2c=0$.


\section{Some hypergeometric facts} \label{appendix:hyp1F1}

In this appendix we will give some elementary facts from the  extensive field of hypergeometric functions, which are used at various places. Hypergeometric functions are power series $\sum\limits_{n=0}^\infty a_n x^n$ where the quotient $a_{n+1}/a_n$ is a rational function of $n$. We will mostly use the functions (from which the general pattern should be clear)
\begin{equation}\label{eq:hypdef}
{}_1F_1 \left(a; b; x \right)=\sum_{n=0}^{\infty}\frac{(a)_n}{(b)_n }\frac{x^n}{n!}, \
{}_2F_1 \left(a, b; c; x \right)=\sum_{n=0}^{\infty}\frac{(a)_n (b)_n}{(c)_n }\frac{x^n}{n!},\
{}_2F_2 \left(a, b; c, d; x \right)=\sum_{n=0}^{\infty}\frac{(a)_n (b)_n}{(c)_n (d)_n}\frac{x^n}{n!},
\end{equation}
where $(a)_n$ is the Pochhammer symbol
$$
(a)_n=a(a+1)(a+2) \cdots (a+n-1), 
$$
which is $\frac{\Gamma(a+n)}{\Gamma(a)}$ if $a$ is not a non-positive integer. 

It is immediate that all hypergeometric functions
have value $1$ when $x=0$.
From the definition \eqref{eq:hypdef}, it is easy to see that
 ${}_2F_1\left(\tfrac{1}{2}, -n; 1; x \right)$ are polynomials for $n\in \mathbf{N}$, which are positive and decreasing for negative $x$.
It also follows that 
$ _1F_1\left(k+1;k+\frac{3}{2};x \right)$ and $_2F_2\left(\frac{1}{2},m+1;1,m+\frac{3}{2};x\right)$ are positive and
 increasing for $x\geq 0$ for all $k,m \in \mathbf{N}$. In addition,
there is a relation $_1F_1\left(k+1;k+\frac{3}{2};-x \right)=e^{-x}
 {}_1F_1\left(\tfrac{1}{2};k+\frac{3}{2};x \right)$ which shows that for any
 $k\in \mathbf{N}$, $_1F_1\left(k+1;k+\frac{3}{2};x \right)>0$ for all
 $x$. It also holds that
$\frac{\partial }{\partial x}\,
_1F_1\left(k+1;k+\frac{3}{2};x\right)=\frac{(k+1) \,
  _1F_1\left(k+2;k+\frac{5}{2};x\right)}{k+\frac{3}{2}}$, so the conclusion is that for any nonnegative integer $k$,
 $_1F_1\left(k+1;k+\frac{3}{2};x \right)$ is positive and also increasing on $\mathbf{R}$.
(Similar remarks hold for ${}_2F_2(1/2, m + 1; 1, m + 3/2;x)$.)

In sections \ref{appendix:agenbaxi} and \ref{appendix:agenbgen}, we also use the result (for non-negative integers $n,m$)
\begin{equation}\label{eq:zlimit}
\lim_{x \to 0}x^n {}_2F_1(m+\tfrac{1}{2},-n;m+1;\tfrac{y}{x})=\frac{(m+1/2)_n}{(m+1)_n}(-y)^n
\end{equation}
which is rather immediate from the definition, bearing in mind that 
${}_2F_1(m+\tfrac{1}{2},-n;m+1;\tfrac{y}{x})$ is a polynomial of degree $n$ in the variable $\tfrac{y}{x}$.
$m=0$ gives the value $\frac{(1/2)_n}{(1)_n}(-y)^n=\frac{\Gamma(n+1/2)}{\Gamma(1/2)n!}(-y)^n=
\tfrac{(n-1/2)!}{\sqrt{\pi}n!}(-y)^n$.
Also, we use the regularized hypergeometric function ${}_2 \tilde F_1(a,b; c;x)$ which is related to ${}_2 F_1(a,b; c;x)$ via
$$
{}_2 \tilde F_1(a,b; c;x)=\frac{{}_2 F_1(a,b; c;x)}{\Gamma(c)}
$$

In \CYY{the section titled ``{\slshape Numerical Behaviour}''} we saw
that for some parameter values (suitably arranged), the series (8d) 
using the function $_1F_1$ gave very efficient
summations. In addition, the computation of $_1F_1$ for the particular
parameter values we need can also be facilitated by the explicit
expressions or the recursion formula given below:
\begin{equation*}
\begin{split}
{}_1F_1 \left(1; \tfrac{3}{2}; x \right)&=\frac{\sqrt{\pi } e^x \text{erf}\left(\sqrt{x}\right)}{2 \sqrt{x}} \ , \\
{}_1F_1 \left(2; \tfrac{5}{2}; x \right)&=\frac{3 \sqrt{\pi } e^x (2 x-1) \text{erf}\left(\sqrt{x}\right)}{8 x^{3/2}}+\frac{3}{4 x} \ ,\\
{}_1F_1 \left(3; \tfrac{7}{2}; x \right)
&=\frac{15 \left(\sqrt{\pi } e^x \left(4 x^2-4 x+3\right) \text{erf}\left(\sqrt{x}\right)+2 \sqrt{x} (2 x-3)\right)}{64 x^{5/2}} \ , \\
{}_1F_1 \left(4; \tfrac{9}{2}; x \right)
&=\frac{35 (\sqrt{\pi }\! e^x \left(8 x^3-12 x^2+18 x-15\right) \text{erf}\left(\sqrt{x}\right)}{256 x^{7/2}}\\
&+\frac{+70 \sqrt{x} \left(4 x^2-8 x+15\right))}{256 x^{7/2}} \ , \\
\, _1F_1\left(k+2;k+\frac{5}{2};x\right) &=
\frac{\left(k+\frac{3}{2}\right) \left(k+\frac{1}{2}\right)}{(k+1) x} \,
   _1F_1\left(k;k+\frac{1}{2};x\right) \\
&-\frac{\left(k+\frac{3}{2}\right)
   \left(k+\frac{1}{2}-x\right) }{(k+1) x}\, _1F_1\left(k+1;k+\frac{3}{2};x\right) \ .
\end{split}
\end{equation*}


\section{Series expansion of $\langle e^{-\lambda \Tr(\R^\mT \A \R \widetilde \B )} \rangle_\R$}\label{appendix:seriesexpansion}
In this appendix we comment on the derivation of the coefficients
$c_1, c_2, c_3$ in the expansion given in Eq.\ (16). 
So, with
$\bar S_{\A,\widetilde \B}(\lambda)=\langle e^{-\lambda \Tr(\R^\mT \A \R \widetilde \B )} \rangle_\R$, we seek 
$\bar S_{\A, \widetilde \B}(\lambda)=1+c_1(\A,\widetilde \B)\lambda+c_2(\A,\widetilde \B)\lambda^2+c_3(\A,\widetilde \B)\lambda^3+ \mathcal{O}(\lambda^4), \lambda \to 0$.
We assume that $\A$ has eigenvalues $a,b,c$ and that $\widetilde \B$ has eigenvalues $d,e,f$.
From the definition of $\bar S$, it is clear that $c_k$ is a homogeneous polynomial in $a,b,c,d,e,f$ of degree $2k$ and that
for fixed $d,e,f$, $c_k$ is a homogeneous symmetric polynomial in $a,b,c$ of degree $k$ (and vice versa with the roles of
$a,b,c$ and $d,e,f$ changed). For instance, $c_1$ must be proportional to $(a+b+c)(d+e+f)=\Tr(\A)\Tr(\widetilde \B)$. 
To derive the coefficients $c_i$, one can either use Eq.\ (10) 
with the replacement 
$d \to \lambda d, e \to \lambda e, f \to \lambda f$, and then use the known expansions w.r.t. $\lambda$ of all involved functions. An alternative is to
start with Eq.\ \eqref{eq:tripint}. Reusing the notation and inserting $e^{-Q_0}=e^{cf-c(d+e)-f(a+b)}$, we therefore look at
$e^{cf-c(d+e)-f(a+b)-Q_1-Q_2}$ where we express $Q_1$ given by \eqref{eq:Q1} and $Q_2$ given by \eqref{eq:Q2B} in terms of $a,b,c,d,e,f$.
By the same replacement $d \to \lambda d, e \to \lambda e, f \to \lambda f$, it is then straightforward to to look at the series expansion w.r.t $\lambda$ of the exponential function first,
and then perform the integrals as indicated in Eq.\ \eqref{eq:tripint}. Either way it is straightforward, but a bit tedious, to get
\begin{equation*}
\begin{split}
3 c_1 =& -(a+b+c) (d+e+f)\\
30 c_2 =& \left(a^2+b^2+c^2\right) \left(3 d^2+2 d (e+f)+3 e^2+2 e f+3 f^2\right)\\
& +2 (a (b+c)+b c)   \left(d^2+4 d (e+f)+e^2+4 e f+f^2\right)\\
210 c_3 =& -2 \left(a^3+b^3+c^3\right) \left(d^3-5 d^2 (e+f)-d \left(5 e^2+8 e f+5 f^2\right)+e^3-5
   e^2 f-5 e f^2+f^3\right)\\
& -(a+b+c) \left(a^2+b^2+c^2\right) \left(3 d^3+13(e+f)(d^2+e+f)+
   d\left(13 (e^2+f^2)+18 e f\right)+3 (e^3+f^3)\right)\\
& -2 a b c
   \left(d^3+9 d^2 (e+f)+d \left(9 e^2+48 e f+9 f^2\right)+e^3+9 e^2 f+9 e
   f^2+f^3\right)
\end{split}
\end{equation*}
and then compare with Eq.\ (16) 
to see that the expressions coincide.


\section{Solutions to equation (20)}\label{appendix:quasiuniqueness} 
In this appendix, we address \CYY{in more detail our claim that there are situations where using a general measurement tensor $\B$ is crucial in determining the diffusive properties of the specimen.} Namely, suppose we are given a specimen with two substances with diffusivity matrices $\A \sim \begin{psmallmatrix}
a & 0 & 0 \\
0 & b & 0 \\
0 & 0 & c
\end{psmallmatrix}$ and $\widetilde \A \sim \begin{psmallmatrix}
q & 0 & 0 \\
0 & 0 & 0 \\
0 & 0 & 0
\end{psmallmatrix}$ which occur in unknown proportions $p$ and $1-p$. We define $x=\Tr(\A)$, $y=\Tr(\A^2)$, $z=\Tr(\A^3)$. By the procedure described in \CYY{the section titled ``{\slshape Estimation of $\A$ from the power series expansion of $\bar{S}$}"} and using  the measurement tensor $\widetilde \B=\begin{psmallmatrix}
\delta & 0 & 0 \\
0 & \epsilon & 0 \\
0 & 0 & 0
\end{psmallmatrix}$,  we know the values of
\begin{subequations} \label{eq:measvalB}
\begin{align}
(\delta+\epsilon)(p x &+(1-p)q) \ , \\
(\delta^2+\epsilon^2)(p\left(x^2+2y\right)+(1-p)3q^2)&+\delta\epsilon(p(4 x^2-2y)+(1-p)2q^2) \ , \\
(\delta^3+\epsilon^3)\left(p(x^3+6 x y+8 z)+(1-p)15q^3 \right)
&+(\delta^2\epsilon+\delta\epsilon^2)(p(9 x^3+12 x y -12 z)+(1-p)9q^3) \ ,
\end{align}
\end{subequations}
for different values of $\delta$ and $\epsilon$. Using an isotropic measurement tensor, we also know the values of
\begin{subequations} \label{eq:measvalIso}
\begin{align}
p x &+(1-p)q \ , \\
p x^2 &+(1-p)q^2 \ , \\
p x^3 &+(1-p)q^3 \ .
\end{align}
\end{subequations}
By varying $\delta$ and $\epsilon$, we see that (\ref{eq:measvalB}b) and (\ref{eq:measvalB}c) contain two independent 
expressions each, and hence (\ref{eq:measvalB}a-c) will give five expressions. We also see that the quantity (\ref{eq:measvalIso}a)
is a special case of (\ref{eq:measvalB}a).  Furthermore, since the expression in (\ref{eq:measvalB}b) for $\delta=0$ subtracted from the same expression for  $\delta=\epsilon$ is 
$=5\epsilon^2(p x^2+(1-p)q^2)$, the quantity in (\ref{eq:measvalIso}b) is also redundant. In total, we get six equations:
\begin{subequations} \label{eq:measvalall}
\begin{align}
p x +(1-p)q &=k_1 \ , \\
p\left(x^2+2y\right)+(1-p)3q^2 &=k_2 \ , \\
p\left(x^2-y\right) &=k_3 \ , \\
p(x^3+6 x y+8 z)+(1-p)15q^3 &=k_4 \ , \\
p(x^3+x y -2 z) &=k_5 \ , \\
p x^3 +(1-p)q^3 &=m_1 ,
\end{align}
\end{subequations}
which follow from \eqref{eq:measvalB} and (\ref{eq:measvalIso}c) as follows.
Writing (L\ref{eq:measvalall}a) for the left hand side of equation (\ref{eq:measvalall}a) et cetera, we readily find that
(L\ref{eq:measvalall}a)=(\ref{eq:measvalB}a)$/(\delta+\epsilon)$, (L\ref{eq:measvalall}b)=$\tfrac{1}{\epsilon^2}$(\ref{eq:measvalB}b)$|{}_{\delta=0}$, 
(L\ref{eq:measvalall}c)=$\tfrac{3}{10\epsilon^2}$(\ref{eq:measvalB}b)$|{}_{\delta=\epsilon}$ - $\tfrac{4}{5\epsilon^2}$(\ref{eq:measvalB}b)$|{}_{\delta=0}$,
(L\ref{eq:measvalall}d)=$\tfrac{1}{\epsilon^3}$(\ref{eq:measvalB}c)$|{}_{\delta=0}$,
(L\ref{eq:measvalall}e)=$\tfrac{5}{84\epsilon^3}$(\ref{eq:measvalB}c)$|{}_{\delta=\epsilon}$ - $\tfrac{4}{21\epsilon^3}$(\ref{eq:measvalB}c)$|{}_{\delta=0}$
and (L\ref{eq:measvalall}f)=(\ref{eq:measvalIso}c).
Thus, the first five equations above come from the measurement tensor $\widetilde \B=\begin{psmallmatrix}
\delta & 0 & 0 \\
0 & \epsilon & 0 \\
0 & 0 & 0
\end{psmallmatrix}$ while (\ref{eq:measvalall}f) comes from isotropic measurements.
To recollect, by varying $\epsilon$ and $\delta$ in the measurement tensor $\widetilde \B=\begin{psmallmatrix}
\delta & 0 & 0 \\
0 & \epsilon & 0 \\
0 & 0 & 0
\end{psmallmatrix}$, and by using isotropic measurements, we can determine the left hand sides in equation \eqref{eq:measvalall}, which are quantities denoted
 $k_1, \ldots k_5, m_1$. From the setup of the problem, we know that $0\leq p \leq 1$ and $q,a,b,c \geq 0$. 

Given the values $k_1, \ldots k_5, m_1$, to what extent do these determine $p, q, a, b, c$?
It generically holds: The set of equations (\ref{eq:measvalall}a)--(\ref{eq:measvalall}e) has two sets of solutions $\{p,q,x,y,z\}$
and $\{P,Q,X,Y,Z\}$ while the inclusion of (\ref{eq:measvalall}f) makes the solution set unique. As an imprecise statement, it also holds that in ``most'' cases, the second solution set $\{P,Q,X,Y,Z\}$ is ruled out out as unphysical. This can be due to that
$P$ is outside the interval $[0,1]$, some of the quantities $Q,X,Y,Z$ are negative, or that the eigenvalues belonging to
a matrix $\hat \A$ with $X=\Tr(\hat \A)$, $Y=\Tr(\hat \A^2)$, $Z=\Tr(\hat \A^3)$ are complex. However, there are values of $p,q,a,b,c$ where the second solution $\{P,Q,X,Y,Z\}$ corresponds to a realistic specimen mixture.

We find the generic solutions to Eq.\ (\ref{eq:measvalall}a)--(\ref{eq:measvalall}e) as follows. (\ref{eq:measvalall}c) gives
$y= x^2-\frac{k_3}{p}$ and inserted in (\ref{eq:measvalall}e) we get
$z=x^3 -\frac{k_3 x+k_5}{2 p}$. Next, (\ref{eq:measvalall}a) gives $q= \frac{p x-k_1}{p-1}$ and inserting these values in (\ref{eq:measvalall}b), we find that
$p=\frac{3 k_1^2-k_2-2 k_3}{6 k_1 x-k_2-2 k_3-3 x^2}$. Finally, with these values inserted  in 
(\ref{eq:measvalall}d), we get an equation for $x$, namely
$$
15\left(3 k_1^2-k_2\right)x^2-3  (5 k_1 (k_2+4
   k_3)-(k_4+4 k_5))x-3 k_1 (k_4+4 k_5)+5
   (k_2+2 k_3)^2=0.
$$
Generically, this equation has two solutions $x, X$, which are both real (since one is known to be real),
and hence two solutions sets are produced. (There are a number of special cases that need attention; e.g., $p=0,1$,
$3k_1^2-k_2-2k_3=0$, $3k_1^2-k_2=0$. For instance, $p$ is obviously  non-unique when $\A=\widetilde \A$. Each such case is straightforward, but also questionable from a measurement point of view. ) As mentioned above, most often the ``second'' solution $\{P,Q,X,Y,Z\}$ can be ruled out as unphysical. On the other hand, starting with $p=9/10, q=5, a=1/10, b=1, c=3$ (so that $x=41/10$, $y=1001/100$, $z=28001/1000$)
then also $\{P,Q,X,Y,Z\}=$ $\{178409449/178940890$, $2467/270$, $55769/13357$, $ 10075315571/892047245$, $4243907217264727/119150750514650\}$ is a ``physical'' solution corresponding to a matrix with eigenvalues $\approx \{.181$, $.715$, $3.279\}$. However, invoking the measurement with the isotropic matrix, i.e., Eq.\ (\ref{eq:measvalall}f), $x$ is determined uniquely
since one easily finds, using (\ref{eq:measvalall}c)--(\ref{eq:measvalall}f) that $x=\frac{15 m_1-k_4-4 k_5}{10 k_3}$.

\section{Proof of Theorem 1} \label{appendix:kappa}
We use Eq.\ (8b) 
with $c=e=f=0$. Then $\bar S=\bar S_{a,b,0}(d)=
\sum \limits_{n=0}^\infty 
  \frac {\left[(a-b)d\right]^n} {\left(2n+1\right) n!} 
  {}_1F_1 \!\left(n+1;n+\tfrac{3}{2};-a d\right)$. The case $a=b$ is already accounted for, and without loss of generality, we can assume that $a>b$.
From \ref{appendix:hyp1F1} we use  $_1F_1\left(n+1;n+\frac{3}{2};-x \right)=e^{-x} {}_1F_1\left(\tfrac{1}{2};n+\frac{3}{2};x \right)$.
Writing $(a-b)d=(1-\frac{b}{a})ad=(1-\alpha)x$ with $x=ad$ and 
$0<\alpha=b/a<1$, the statement of Theorem 1 reads
\begin{equation}\label{eq:kappasum}
\lim_{x\to \infty}\frac{1}{a}\sum_{n=0}^{\infty}\frac{(1-\alpha)^n x^{n+1}}{(2n+1)n!}e^{-x} {}_1F_1\left(\frac{1}{2};n+\frac{3}{2};x \right)=\frac{1}{2\sqrt{ab}}.
\end{equation}
From Ref.\ \onlinecite{Abramowitzbook}, we find that $_1F_1\left(\tfrac{1}{2};n+\frac{3}{2};x \right)=\frac{(n+\frac{1}{2})!}{\sqrt{\pi}}e^xx^{-n-1}(1+\mathcal{O}(\frac{1}{x}))$
as $x \to \infty$. If we can justify changing the limit and the summation in 
Eq.\ (\ref{eq:kappasum}), we get
$$
\frac{1}{a}\sum_{n=0}^{\infty}\lim_{x \to \infty}\frac{(1-\alpha)^n x^{n+1}}{(2n+1)n!}e^{-x} \frac{(n+\frac{1}{2})!}{\sqrt{\pi}}e^xx^{-n-1} (1+\mathcal{O}(\tfrac{1}{x}))=
\frac{1}{a}\sum_{n=0}^\infty\frac{(1-\alpha)^n}{(2n+1)n!} \frac{(n+\frac{1}{2})!}{\sqrt{\pi}}=
\frac{1}{a}\frac{1}{2\sqrt{\alpha}}=\frac{1}{2\sqrt{ab}}.
$$
Here we have used that (for $|t|<1$) $\frac{1}{\sqrt{1-t}}=\sum\limits_{n=0}^\infty(-1)^n\binom{-1/2}{n}t^n$ with $t=1-\alpha$ and observed\footnote{
For this observation, the reflection formula $\Gamma(z)\Gamma(1-z)=\pi/\sin(\pi z)$ for non integer $z$ is helpful.
}
 that
$\frac{(-1)^n}{2}\binom{-1/2}{n}$ indeed equals $\frac{(n+\frac{1}{2})!}{\sqrt{\pi}(2n+1)n!}$ for $n=0,1,2,\ldots$. Hence Theorem 1 is proven if we can change summation and limit in Eq.\ (\ref{eq:kappasum}). To see that this is allowed, we define the functions
$f_n(x)=\frac{x^{n+1}}{(2n+1)n!}e^{-x} {}_1F_1\left(\tfrac{1}{2};n+\frac{3}{2};x \right)$ and consider\footnote{The factor $1/a$ is irrelevant.} 
$\sum\limits_{n=0}^\infty (1-\alpha)^n f_n(x)$. It is enough to prove that this series converges uniformly for $x\geq 0$, and this will be the case
if $\sum\limits_{n=0}^\infty (1-\alpha)^n M_n $ converges, where $M_n=||f_n||_\infty =\sup_{x \geq 0}|f_n(x)|$. 

To study $M_n$ we use \cite{Abramowitzbook} the integral representation
$
{}_1F_1\left(\tfrac{1}{2};n+\frac{3}{2};x \right)=\frac{\Gamma(n+\frac{3}{2})}{\sqrt{\pi}n!}\int_0^1e^{xt}\frac{(1-t)^n}{\sqrt{t}}\dd t,
$
so that (after a slight simplification)
$$
f_n(x)=\frac{(n-\frac{1}{2})!}{2\sqrt{\pi}n!}\frac{x^{n+1}e^{-x}}{n!}\int_0^1\frac{e^{xt}}{\sqrt{t}}(1-t)^n \dd t
\ ,$$
where we also put $c_n=\frac{(n-\frac{1}{2})!}{2\sqrt{\pi}n!}$ and
$g_n(x)=\frac{x^{n+1}e^{-x}}{n!}\int\limits_0^1\frac{e^{xt}}{\sqrt{t}}(1-t)^n \dd t$. Next, we write
\begin{align*}
  g_n(x)=\frac{x^{n+1}e^{-x}}{n!}\int\limits_0^1\frac{e^{xt}}{\sqrt{t}}(1-t)^n \dd t = g_{1n}(x)+g_{2n}(x) 
  =\frac{x^{n+1}e^{-x}}{n!}\int\limits_0^1\frac{1}{\sqrt{t}}(1-t)^n \dd t+\frac{x^{n+1}e^{-x}}{n!}\int\limits_0^1\frac{e^{xt}-1}{\sqrt{t}}(1-t)^n \dd t.
\end{align*}
Since $\int\limits_0^1\frac{1}{\sqrt{t}}(1-t)^n \dd t=\frac{\sqrt{\pi}n!}{(n+1/2)!}$, we find that $g_{1n}(x)=\frac{\sqrt{\pi}x^{n+1}e^{-x}}{(n+1/2)!}$, which attains its
maximum (for $x\geq 0$) at $x=n+1$ and hence $|| g_{1n}||_\infty = \frac{\sqrt{\pi}(n+1)^{n+1}e^{-(n+1)}}{(n+1/2)!}$. To estimate $g_{2n}$, we need an inequality.
First, with $\phi(v)=(1+v)(1-e^{-v})-2v$ it is easy to see that $\phi(v)\leq 0$ for $v \geq 0$. This means that (for $v \geq 0$) 
$1-e^{-v} \leq 2\frac{v}{1+v}\leq 2\sqrt\frac{v}{1+v}$ and finally $\frac{e^v-1}{\sqrt{v}}\leq \frac{2e^v}{\sqrt{1+v}}$. $g_{2n}$ can now be estimated through
$$
g_{2n}(x)=\frac{x^{n+1}e^{-x}}{n!}\int\limits_0^1\sqrt{x}\frac{e^{xt}-1}{\sqrt{xt}}(1-t)^n \dd t\leq
\frac{x^{n+1}e^{-x}}{n!}\int\limits_0^1\sqrt{x}\frac{2e^{xt}}{\sqrt{1+xt}}(1-t)^n \dd t.
$$
Splitting the integral into two, we first get
\begin{align*}
\frac{x^{n+1}e^{-x}}{n!}\int\limits_{1/2}^1\frac{2\sqrt{x}}{\sqrt{1+xt}}e^{xt}(1-t)^n \dd t &\leq \frac{x^{n+1}e^{-x}}{n!}\int\limits_{1/2}^12\sqrt{2}e^{xt}(1-t)^n \dd t \\ 
&=\frac{2\sqrt{2}x^{n+1}}{n!}\int\limits_{1/2}^1e^{x(t-1)}(1-t)^ndt \\ 
&= 
\frac{2\sqrt{2}x^{n+1}}{n!}\int\limits_{0}^{1/2}e^{-x s}s^n \dd s \\ 
& \leq \frac{2\sqrt{2}x^{n+1}}{n!}\int\limits_{0}^{\infty}e^{-x s}s^n \dd s \\
&=\frac{2\sqrt{2}x^{n+1}}{n!}\frac{n!}{x^{n+1}}=2\sqrt{2}. 
\end{align*}

For the remaining part, we get 
\begin{align*}
\frac{x^{n+1}e^{-x}}{n!}\int\limits_{0}^{1/2}\frac{2\sqrt{x}}{\sqrt{1+xt}}e^{xt}(1-t)^n \dd t &\leq
\frac{2x^{n+1}e^{-x}}{n!}\int\limits_{0}^{1/2}\sqrt{x}e^{xt}(1-t)^n \dd t \\
& = \frac{2x^{n+3/2}}{n!}\int\limits_{0}^{1/2}e^{x(t-1)}(1-t)^n \dd t \nonumber \\
& 
=\frac{2\sqrt{2}}{n!}\int\limits_{x/2}^{x}\sqrt{\frac{x}{2}}e^{-v}v^n \dd t \nonumber \\
& \leq \frac{2\sqrt{2}}{n!}\int\limits_{x/2}^{x}e^{-v}v^{n+1/2} \dd t \nonumber \\ 
& \leq \frac{2\sqrt{2}}{n!}\int\limits_{0}^{\infty}e^{-v}v^{n+1/2} \dd t \nonumber \\ 
& = \frac{2\sqrt{2}(n+1/2)!}{n!} \ . \nonumber
\end{align*}
Collecting terms, we see that
$$
M_n \leq \frac{(n-\frac{1}{2})!}{2\sqrt{\pi}n!} \left(
\frac{\sqrt{\pi}(n+1)^{n+1}e^{-(n+1)}}{(n+1/2)!}+2\sqrt{2}+\frac{2\sqrt{2}(n+1/2)!}{n!}
\right).
$$
By a straightforward use of Stirling's formula in the form\footnote{Here, $z$ need not be an integer.} $z!=\Gamma(z+1)=\sqrt{2\pi z}\left(\frac{z}{e}\right)^{z}(1+\mathcal{O}(\frac{1}{z}))$ as $z \to \infty$, we see that $\frac{(n-\frac{1}{2})!}{2\sqrt{\pi}n!} =\mathcal{O}(\frac{1}{\sqrt{n}})$, 
$\frac{\sqrt{\pi}(n+1)^{n+1}e^{-(n+1)}}{(n+1/2)!}=\mathcal{O}(1)$ and
$\frac{2\sqrt{2}(n+1/2)!}{n!}=\mathcal{O}(\sqrt{n})$, all as $n \to \infty$.
In total, $M_n=\mathcal{O}(1)$ as $n \to \infty$ so that $\sum\limits_{n=0}^\infty (1-\alpha)^n M_n $ converges (since $0<\alpha<1$). This concludes the proof.


\bibliography{../../../sharedbib.bib}